\newtheorem{definition}{Definition}
\newtheorem{proposition}[definition]{Proposition}
\newtheorem{lemma}[definition]{Lemma}
\newtheorem{theorem}[definition]{Theorem}
\newtheorem{corollary}[definition]{Corollary}
\newtheorem{conjecture}[definition]{Conjecture}
\newtheorem{remark}[definition]{Remark}
\newtheorem{example}[definition]{Example}
\def\squareforqed{\hbox{\rlap{$\sqcap$}$\sqcup$}}
\def\qed{\ifmmode\squareforqed\else{\unskip\nobreak\hfil
\penalty50\hskip1em\null\nobreak\hfil\squareforqed
\parfillskip=0pt\finalhyphendemerits=0\endgraf}\fi}
\def\endenv{\ifmmode\;\else{\unskip\nobreak\hfil
\penalty50\hskip1em\null\nobreak\hfil\;
\parfillskip=0pt\finalhyphendemerits=0\endgraf}\fi}
\newenvironment{proof}{\noindent \textbf{{Proof.~} }}{\qed}
\def\bcj{\begin{conjecture}}
\def\ecj{\end{conjecture}}
\def\bcr{\begin{corollary}}
\def\ecr{\end{corollary}}
\def\bd{\begin{definition}}
\def\ed{\end{definition}}
\def\bea{\begin{eqnarray}}
\def\eea{\end{eqnarray}}
\def\bem{\begin{enumerate}}
\def\eem{\end{enumerate}}
\def\bim{\begin{itemize}}
\def\eim{\end{itemize}}
\def\bl{\begin{lemma}}
\def\el{\end{lemma}}
\def\bpf{\begin{proof}}
\def\epf{\end{proof}}
\def\bpp{\begin{proposition}}
\def\epp{\end{proposition}}
\def\br{\begin{remark}}
\def\er{\end{remark}}
\def\bt{\begin{theorem}}
\def\et{\end{theorem}}
\def\a{\alpha}
\def\b{\beta}
\def\k{\kappa}
\def\p{\pi}
\def\r{\rho}
\def\s{\sigma}
\def\ph{\varphi}
\def\ps{\psi}
\def\G{\Gamma}
\newcommand{\nc}{\newcommand}
\nc{\cA}{{\cal A}} \nc{\cB}{{\cal B}} \nc{\cC}{{\cal C}}
\nc{\cD}{{\cal D}} \nc{\cE}{{\cal E}} \nc{\cF}{{\cal F}}
\nc{\cG}{{\cal G}} \nc{\cH}{{\cal H}} \nc{\cI}{{\cal I}}
\nc{\cJ}{{\cal J}} \nc{\cK}{{\cal K}} \nc{\cL}{{\cal L}}
\nc{\cM}{{\cal M}} \nc{\cN}{{\cal N}} \nc{\cO}{{\cal O}}
\nc{\cP}{{\cal P}} \nc{\cR}{{\cal R}} \nc{\cS}{{\cal S}}
\nc{\cT}{{\cal T}} \nc{\cX}{{\cal X}} \nc{\cZ}{{\cal Z}}
\def\dim{\mathop{\rm Dim}}
\def\max{\mathop{\rm max}}
\def\rank{\mathop{\rm rank}}
\def\tr{\mathop{\rm Tr}}
\def\dg{\dagger}
\def\ox{\otimes}
\newcommand{\bra}[1]{\langle#1|}
\newcommand{\ket}[1]{|#1\rangle}
\newcommand{\proj}[1]{| #1\rangle\!\langle #1 |}
\newcommand{\ketbra}[2]{|#1\rangle\!\langle#2|}
\newcommand{\braket}[2]{\langle#1|#2\rangle}
\newcommand{\jmp}{J. Math. Phys.}
\begin{document}
\title{Distillability and PPT
entanglement of low-rank quantum states}

\author{Lin Chen}
\affiliation{Centre for Quantum Technologies, National University of
Singapore, 3 Science Drive 2, Singapore 117542}
\email{cqtcl@nus.edu.sg (Corresponding~Author)}

\def\Dbar{\leavevmode\lower.6ex\hbox to 0pt
{\hskip-.23ex\accent"16\hss}D}
\author {{ Dragomir {\v{Z} \Dbar}okovi{\'c}}}

\affiliation{Department of Pure Mathematics and
Institute for Quantum Computing,
University of Waterloo, Waterloo, Ontario, N2L 3G1, Canada}
\email{djokovic@uwaterloo.ca}

\begin{abstract}
The bipartite quantum states $\r$, with rank strictly smaller
than the maximum of the ranks of the reduced states $\r_A$
and $\r_B$, are distillable by local operations and classical
communication \cite{hst03}.
Our first main result is that this is also true for NPT states with
rank equal to this maximum. (A state is PPT if the partial transpose
of its density matrix is positive semidefinite,
and otherwise it is NPT.)
This was conjectured first in 1999 in the special case when the ranks
of $\r_A$ and $\r_B$ are equal (see \cite[arXiv preprint]{hst03}).
Our second main result provides a complete solution of the
separability problem for bipartite states of rank 4. Namely,
we show that such a state is separable if and only if it is PPT
and its range contains at least one product state.
We also prove that the so called checkerboard states are distillable
if and only if they are NPT.
\end{abstract}

\date{ \today }

\pacs{03.65.Ud, 03.67.Mn, 03.67.-a}



\maketitle



\section{\label{sec:introduction} introduction}

Pure quantum entanglement plays the essential role in various
quantum information tasks, such as GHZ states in quantum
teleportation \cite{bbc93} and graph states in quantum computation
\cite{br01}. However pure entangled states are always coupled with
the environment due to the unavoidable decoherence. As a result they
become mixed states which usually cannot be directly used, or become
quite useless for quantum information tasks \cite{sf11}.

Hence, extracting pure entanglement from mixed states is a basic
task in quantum information. Formally, this task is called {\em
entanglement distillation} and the entanglement measure quantifying
the asymptotically obtainable pure entanglement is called {\em
distillable entanglement}; the states from which we can obtain pure
entanglement are {\em distillable} \cite{bds96}. Apart from the
mentioned physical applications, the distillable entanglement is the
lower bound of many well-known entanglement measures such as the
entanglement cost, entanglement of formation \cite{bds96} and the
squashed entanglement \cite{cw04,bcy10}. An upper bound for
distillable entanglement is provided by the distillable key
\cite{hhh05,csw10}, and recently the gap between them has been
experimentally observed \cite{dkd11}. A lot of effort has been
devoted to the investigation of entanglement distillation, for a
review see \cite{hhh09}.

For a state $\r$ acting on $\cH_A \ox \cH_B$ the partial transpose,
computed in an orthonormal (o.n.) basis $\{\ket{a_i}\}$ of system A,
is defined by
$\r^\G :=\sum_{ij} \ketbra{a_i}{a_j} \ox \bra{a_j}\r\ket{a_i}$.
Mathematically, we say that $\r$ is {\em
1-distillable} if there exists a pure bipartite state $\ket{\ps}$ of
Schmidt rank 2 such that $\bra{\ps} \r^{\G} \ket{\ps} < 0$
\cite{dss00}. More generally, we say that $\r$ is {\em
$n$-distillable} if the state $\r^{\ox n}$ is 1-distillable.
Finally, we say that $\r$ is {\em distillable} under local
operations and classical communication (LOCC) if it is
$n$-distillable for some $n\ge1$.

It is a famous open question whether all bipartite NPT states, i.e.,
the states $\r$ such that $\r^\G$ has at least one negative
eigenvalue, are distillable. Although many papers deal with the
problem of distillation of pure entanglement from NPT states
~\cite{clarisse05,pph07,br03,vd06,hhh09}, the complete solution is
still unknown. In this paper we will solve an open question, which
is an important special case of the distillation problem.

It follows easily from the definition of distillability given above
that bipartite PPT states, i.e., states with positive semidefinite
partial transpose, are not distillable by LOCC
\cite{horodecki97,hhh98prl}. In recent years, PPT
entangled states have been extensively studied in connection with
the phenomena of entanglement activation and universal usefulness
\cite{hhh99,masanes06}, the distillable key \cite{hhh05}, the
symmetry permutations \cite{tg09} and entanglement witnesses
\cite{lkc00}, both in theory and experiment \cite{ab09}.

Therefore, it is an important and basic question to decide whether a
given PPT state is separable \cite{hhh09}. The famous partial
transpose criterion says that the separable states are PPT
\cite{peres96}. Furthermore Horodeckis showed that this is necessary
and sufficient for states in $M\ox N$ quantum systems with
$MN\le6$ \cite{hhh96}.
However the problem becomes difficult for any bigger dimension. For
instance, the first examples of PPT entangled states acting on
$2\ox 4$ and $3\ox 3$ were discovered in 1997 \cite{horodecki97},
but it is still an open problem to decide which PPT states in these systems are separable. The situation is not much better if we
consider the separability problem for bipartite states of fixed rank,
where the cases of ranks 4 and 5 remained unresolved. We have
discovered a simple criterion for separability of states of rank 4.

Throughout the paper the spaces $\cH_A$ and $\cH_B$ are
finite-dimensional. We will ignore the normalization condition
for states since it does not affect the process of entanglement
distillation. Thus, unless stated otherwise, we assume that
the states are non-normalized, and we set $M=\dim \cH_A$
and $N=\dim \cH_B$. As our definition of an ``$M\times N$ state''
can be easily misinterpreted, we state it formally.

\bd
A bipartite state $\r$ is an {\em $M\times N$ state} if
the reduced states $\r_A=\tr_B(\r)$ and $\r_B=\tr_A(\r)$ have
ranks $M$ and $N$, respectively.
\ed

Our first main result, Theorem \ref{thm:distillation=MxNrankN},
asserts that $M\times N$ rank-$N$ NPT states are distillable under
LOCC. The stronger statement that these states are in fact
1-distillable under LOCC is an immediate consequence of the proof.
The special case $M=N$ was proposed in 1999 as a conjecture
\cite[Conjecture 1]{hst03}. As pointed out in \cite{hst03}, it
follows from this special case that all rank 3 entangled states are
distillable. Hence we recover the main result of \cite{cc08}. We
give the details in Sec. II and III.

Our second main result, Theorem \ref{thm:PPTrank4}, solves the
separability problem for bipartite states of rank 4. Namely, it
asserts that a bipartite state of rank 4 is separable if and only if
it is PPT and its range contains at least one product state. This is
a numerically operational criterion, e.g., see the method introduced
in \cite{hlv00}. Therefore, $3\times 3$ PPT states of rank 4
constitute the first class of PPT states of fixed dimension and rank
for which the separability problem can be operationally decided.
We also discuss the analytical method based on the Pl{\"u}cker
coordinates and the Grassmann variety. We give the details in Sec.
IV, V, and the appendix.

We also apply our results to characterize the quantum correlations
inside a tripartite pure state $\ket{\ps}$.
In Sec. \ref{sec:Applicat}, Theorem\ref{thm:nonnonX=SSX}, we show
that the reduced density operators $\r_{AB}$ and $\r_{AC}$ of
$\r=\proj{\ps}$ are undistillable if and only if they are separable.
In other words, there is no tripartite pure state with two undistillable entangled reduced density operators.
We also discuss the relation between distillability and
quantum discord \cite{ccm10}, where the latter is another kind of
quantum correlation.

To investigate the states beyond $M\times N$ states of rank $N$, we
study the family of checkerboard states. This family consists of
two-qutrit states \cite{dzd11}, generically of rank 4, which
generalizes the family constructed by Bru{\ss} and Peres \cite{bp00}.
In Sec. \ref{sec:distilling3x3rank4} we prove that the NPT
checkerboard states are 1-distillable. In Sec. \ref{sec:full-rank}
we analyze further the full-rank properties (defined below) used
in the proof of our first main result. We conclude and discuss our
results in Sec. \ref{sec:conclusion}.

For convenience and reference, we list some known results on
distillability of bipartite NPT states under LOCC:
 \bem
\item[(A)]
 The $2\times N$ NPT states are distillable
\cite{bbp96PRL,hhh97prl,dss00}.
\item[(B)]
  The states violating the reduction criterion are distillable \cite{hh99}.
\item[(C)]
  The $M\times N$ states of rank less than $M$ or $N$ are
distillable \cite{hst03}.
\item[(D)]
 The NPT states of rank at most three are distillable
\cite{cc08}.
\item[(E)]
 All NPT states can be converted by LOCC into NPT Werner states  \cite{werner89,dcl00,dss00}.
 \eem

The states mentioned in (A-D) are in fact 1-distillable.
For the states in result (A), it suffices to observe that generic
pure states in a $2\ox N$ space have Schmidt rank 2.
For (B) see \cite{clarisse05} or the next section. The
1-distillability in result (C) follows from that of (B) and the
proof of \cite[Theorem 1]{hst03}. The assertion for states in result
(D) can be deduced from those of (A) and (B) \cite{cc08}.

In view of the result (E), one might hope to solve the problem of
entanglement distillation by distilling the NPT Werner states.
Unfortunately, although this problem has been studied extensively in
the past decade, it still remains an open and apparently very hard
problem \cite{pph07}. It is thus of high importance to address the
non-Werner states and propose useful tools for their distillation.
In particular, one can see that the result (B), namely reduction
criterion, plays quite important role for distillation of generic
entangled states. For example, we have argued that other three main
results (A), (C), (D) are partially or totally derivable from the
reduction criterion.

Let us introduce the right full-rank property via the Hermitian
observable
$\tr_A(\proj{x}\r)=\bra{x}\r\ket{x}$ where $\ket{x}\in\cH_A$.
We say that $\r$ has the {\em right full-rank property (RFRP)} if
the operator $\bra{x}\r\ket{x}$ is invertible (i.e., has full rank) for some $\ket{x}\in\cH_A$. One defines similarly the {\em left
full-rank property (LFRP)} by using the Hermitian observable
$\tr_B(\proj{y}\r)=\bra{y}\r\ket{y}$ with $\ket{y}\in\cH_B$. Every
result about RFRP has its analog for LFRP, and we shall work mostly
with the former.

We point out that RFRP has appeared previously in
\cite{hlv00} where it was shown that all $M\times N$ $(M\le N)$
PPT states of rank $N$ have RFRP (see the proof of Theorem 1
in that paper).

We prove in Theorem~\ref{thm:distillation=nonFRP} that bipartite
states violating LFRP or RFRP are distillable under LOCC.
We refer to this result as the {\em full-rank criterion}.
It is a crucial ingredient in the proof of
Theorem \ref{thm:distillation=MxNrankN}.
Let us also mention that Theorems \ref{thm:distillation=nonFRP}
and \ref{thm:distillation=MxNrankN} remain valid if we replace
the word ``distillable'' with ``1-distillable'' (see the next
section).

The questions of separability and distillability for the $M\times N$
states $\r$ of rank $R\le N$, with $M\le N$, have very simple answer:
they are separable if and only if they are PPT, and they are
distillable if and only if they are NPT.
The result on separability is proved in \cite{hlv00}, and the
one on distillability in \cite[Theorem 1]{hst03} when
$R<N$, while our first main result handles the case $R=N$.

As a transition between the first and second main result, we
introduce the concepts of reducible and irreducible quantum states.
The sum $\r$ of bipartite states $\r_i$ is {\em B-direct} if
$\cR(\r_B)$ is a direct sum of $\cR((\r_i)_B)$. A bipartite state
$\r$ is {\em reducible} if it is a B-direct sum of two states and
otherwise $\r$ is {\em irreducible}. In the language of quantum
state transformation, a reducible state is stochastic-LOCC (SLOCC)
equivalent to the sum of $\r_i$ whose reduced density operators
$(\r_i)_B$ are pairwise orthogonal \cite{dvc00}. We formally state
this fact in Proposition \ref{pp:def-red}. More properties of
reducible states are also introduced, e.g., to study the
distillability in Proposition \ref{pp:reducible=SUMirreducible},
\ref{pp:irreducibledistillable,commonkernel} and PPT, separability
in Corollary \ref{cor:reducible=SUMirreducible}.

Based on the reducibility, we can easily decide the separability and
distillability of the bipartite state $\r$ of rank 4 when there is
some $\ket{x}\in\cH_A$ such that the operator $\bra{x}\r\ket{x}$ is
non-invertible. This is demonstrated in Lemma
\ref{le:3x3rank4,rankC=1}. We solve the same problem when there is a
product state in the range of $\r$ in Proposition
\ref{pp:3x3rank4,productstate}. In terms of these results we reach
our second main result. That is, a bipartite state of rank 4 is
separable if and only if it is PPT and there is a product state in
its range. This is illustrated in Theorem \ref{thm:PPTrank4} and it
is numerically operational. We also propose an analytical method in
terms of the Grassmann variety to implement our result. In the case
of 2-dimensional subspaces of a $2\ox 3$ space we give a simple
equation which is satisfied if and only if the subspace contains a
product state. Similarly, in the appendix we give a single equation
test for the existence of a product state in the 3-dimensional
subspace of a $2\ox 4$ space. Such equation also exists in the case
of 4-dimensional subspaces of a $3\ox 3$ space but we were not able
to compute it. The method is totally analytical and does not rely on
numerical estimation. The reducibility also constitutes the
technical bases for the distillability of checkerboard states, as
proved in Theorem \ref{thm:checkboard}.

We also apply the main results to derive a few physical facts. For
example, we prove in Theorem \ref{thm:nonnonX=SSX} a tripartite pure
state cannot have two undistillable entangled reduced density
operators. This is realized by the fact that the two reduced density
operators have a maximal local rank equal to its rank, as well as
the first main result and Proposition \ref{pp:PPTrankN}.

Throughout the paper we write $I_k$ resp. $0_k$ for the
identity resp. zero $k\times k$ matrix. The inequality $H\ge0$
means that $H$ is a positive semidefinite Hermitian operator
or matrix. Similarly, $H>0$ means that $H$ is positive definite.
We denote by $\cR(\r)$ the range space of an operator $\r$.

We write $[X,Y]$ for the commutator $XY-YX$. We recall that
if $X$ is normal and $Y$ arbitrary then $[X,Y]=0$ implies
$[X^\dag,Y]=0$. This follows from the fact that, for a normal
operator $X$, there exists a polynomial $f(t)$ such that
$X^\dag=f(X)$.

\section{\label{sec:preliminary} preliminary facts}

Before proceeding with the proofs of our results, let us recall
some additional facts. First, by using an o.n. basis
$\{\ket{u_i}\}$ of $\cH_A$, we can write any state as
$\r = \sum^M_{i,j=1} \ketbra{u_i}{u_j} \ox \s_{ij}$.
Hence $\r$ is represented by its matrix
\begin{equation}
\label{eq:mat-rho}
\r=\left(
    \begin{array}{ccccc}
      \s_{11} & \s_{12}  & \cdots & \s_{1M} \\
      \s_{21} & \s_{22}  &        & \s_{2M} \\
      \vdots  &          &        &         \\
      \s_{M1} & \s_{M2}  &        & \s_{MM} \\
    \end{array}
  \right).
\end{equation}
Usually we assume that $\{\ket{u_i}\}$ is the computational
basis $\{\ket{i}\}$.

Second, the RFRP (defined in the Introduction) was an essential tool
used in \cite{hlv00} and will also play an important role in this
paper. Clearly, RFRP is equivalent to the following assertion: One
can choose an o.n. basis $\{\ket{u_i}\}$ of $\cH_A$ such that some
diagonal block $\s_{ii}$ in the above matrix has rank $N$.

The condition that $\rank(\bra{x}\r\ket{x})=N$ is equivalent to
$\ker \bra{x}\r\ket{x}=0$. It follows that $\r$ violates RFRP if and
only if for each state $\ket{x}\in\cH_A$ there exists a state
$\ket{y}\in\cH_B$ such that $\ket{x}\ox\ket{y}\in\ker\r$.

\begin{example} \label{sep->FRP}
{\rm We use an argument from the proof of \cite[Lemma 3]{hlv00},
to show that all $M\times N$ separable states $\r$ have RFRP. We
can write $\r$ as a finite sum of non-normalized product states
$\r=\sum_i \proj{a_i,b_i}$. Choose $\ket{x}\in\cH_A$ such that
$\braket{x}{a_i}\ne0$ for all $i$. As the $\ket{b_i}$ span $\cH_B$,
$\bra{x}\r\ket{x}$ has rank $N$.

We will show in the next section that this result extends to all
PPT states. }
\end{example}

We shall need the following result proved in \cite{hlv00}.
\bpp
\label{pp:PPTrankN}
If $\r$ is an $M\times N$ PPT state with $M\le N$ and rank $N$,
then $\r$ is separable and can be written as a sum of $N$
product states
$$ \r=\sum^N_{i=1}\proj{a_i,b_i}. $$
\epp

If $A$ and $B$ are linear operators on $\cH_A$ and $\cH_B$,
respectively, then we shall refer to $A\ox B$ as a {\em local
operator}. The abbreviation ILO will refer to invertible local
operators, i.e., to operators $A\ox B$ with both $A$ and $B$
invertible. It is easy to see that RFRP is invariant under ILOs,
i.e., if $A\ox B$ is an ILO and the state $\r$ has RFRP then the
transformed state $(A\ox B)^\dag ~\r~ (A\ox B)$ also has RFRP.
Moreover, if $\r$ violates RFRP then so does $(A\ox B)^\dag ~\r~
(A\ox B)$ even if $A\ox B$ is not invertible.

Third, there is a simple way to prove distillability of $\r$
which can be applied in many cases. For that purpose observe
that if $\r$ is given by its matrix (\ref{eq:mat-rho}) then
$$ \r^\G=\left(
    \begin{array}{ccccc}
      \s_{11} & \s_{21}  & \cdots & \s_{M1} \\
      \s_{12} & \s_{22}  &        & \s_{M2} \\
      \vdots  &          &        &         \\
      \s_{1M} & \s_{2M}  &        & \s_{MM} \\
    \end{array}
  \right). $$
Since $\r$ is a Hermitian matrix, so is $\r^\G$.
Let $X=\left(\begin{array}{cc}a&b\\b^*&c\end{array}\right)$ be
a principal $2\times2$ submatrix of $\r^\G$. Thus $a$ and $c$
are diagonal entries of $\r^\G$.
\bl
\label{le:triv-dist}
If $ac=0$ while $b\ne0$ then $\r$ is distillable.
\el
\bpf
Since $X$ has a negative eigenvalue and the diagonal blocks
$\s_{ii}\ge0$, the diagonal entries $a$ and $c$ must belong to
different diagonal blocks, say $\s_{kk}$ and $\s_{ll}$. Let
$P$ be the orthogonal projector onto the 2-dimensional
subspace of $\cH_A$ spanned by $\ket{k}$ and $\ket{l}$.
Then the projected state $\r':=(P\ox I_N)~\r~(P\ox I_N)$ is
an NPT state acting on a $2\ox N$ system. Hence $\r'$ is
distillable by the result (A). Consequently, $\r$ is also
distillable.
\epf

Whenever the distillability of $\r$ can be proved by direct
application of this simple lemma, we shall say that the state $\r$
is {\em trivially distillable}. For example, by using this concept
we generalize Theorem 2 of Ref. \cite{hst03}.
 \bt
 \label{thm:onlyonepureentangled}
Let $\r=\proj{\ps}+\s$ be an $M\times N$ state where $\s$ is
any state and $\ket{\ps}$ is a pure entangled state.
If $r:=\rank(\s_A)<M$, then $\r$ is distillable.
 \et
 \bpf
We may assume that $\{\ket{1},\ldots,\ket{r}\}$ is a basis of
$\cR(\s_A)$. We can write $\ket{\psi}=\sum_{i=1}^M \ket{i,\psi_i}$
where $\ket{\psi_i}\in\cH_B$. Since $r<M$, we must have
$\ket{\ps_M}\ne0$. Since $\ket{\psi}$ is entangled, at least one of
the $\ket{\psi_k}$, $k<M$, is not parallel to $\ket{\ps_M}$. Let
us fix such index $k$. Let $V$ be an ILO on $\cH_B$ such that
$V\ket{\psi_k}=\ket{1}$ and $V\ket{\psi_M}=\ket{2}$.
If $P=\proj{k}+\proj{M}$, then the state
$\r':=(P\ox V)~\r~(P\ox V)^\dag$ acts on a $2\ox N$ system.

We claim that $\r'$ is trivially distillable. Since this is the
first time that we are applying Lemma \ref{le:triv-dist}, we shall
give all the details.
Since $P\ox V\ket{\psi}=\ket{k,1}+\ket{M,2}$, we have
\bea \notag P\ox V~\proj{\psi}~P\ox V^\dag &=&
\proj{k,1}+\proj{M,2}+\ketbra{k,1}{M,2}+\ketbra{M,2}{k,1}, \\
\notag \left( P\ox V~\proj{\psi}~P\ox V^\dag \right)^\G &=&
\proj{k,1}+\proj{M,2}+\ketbra{M,1}{k,2}+\ketbra{k,2}{M,1}. \eea
On the other hand since $\s~(\proj{M}\ox V)^\dag=0$, the
$2N\times2N$ matrix of $(P\ox V)~\s~(P\ox V)^\dag$ has its
nonzero entries all in the $N\times N$ submatrix contained in the
first $N$ rows and columns. Hence, the $2\times2$ principal
submatrix of $(\r')^\G$ in rows 2 and $N+1$ has the form
$\left(\begin{array}{cc}*&1\\1&0\end{array}\right)$,
which proves our claim. It follows from this claim that
$\r$ is also distillable.
 \epf

Theorem 2 of Ref. \cite{hst03} is obtained by taking $\s$ to be
a pure product state.

In the following proposition we represent $M\times N$ states in
a convenient matrix form which will be used in our proofs.
We include a result proved in \cite{hlv00} for the states
which are also PPT.

\begin{proposition} \label{pp:pocetak}
Let $\r$ be an $M \times N$ state of rank $R$.

(i) $\r$ can be written as
\bea \label{ro}
\r &=& \sum_{i,j} \ketbra{i}{j} \ox C_i^\dag C_j
=(C_1,\ldots,C_M )^\dg \cdot (C_1,\ldots,C_M),
\eea
where $C_i$ are $R\times N$ matrices such that
$\r_B=\sum_i C_i^\dag C_i>0$.

 (ii) If $R=N$ and $\r$ has RFRP, then there exists an invertible
local operator $A\ox B$ such that
\begin{eqnarray*}
\r' &:=& (A\ox B)^\dg ~\r~ A\ox B \\
&=& (C_1,\ldots,C_{M-1},I_N)^\dg\cdot(C_1,\ldots,C_{M-1},I_N).
\end{eqnarray*}
Moreover, $\r'$ is PPT if and only if the $C_i$ are pairwise
commuting normal matrices.
\end{proposition}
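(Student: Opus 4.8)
The plan is to handle the two parts separately, since (i) is essentially bookkeeping and (ii) carries the real content.

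For part (i): I would start from the spectral decomposition of $\r$. Since $\r\ge0$ has rank $R$, write $\r=\sum_{k=1}^R \proj{\ph_k}$ for (non-normalized, pairwise orthogonal) vectors $\ket{\ph_k}\in\cH_A\ox\cH_B$. Expanding each $\ket{\ph_k}=\sum_{i=1}^M \ket{i}\ox\ket{v_{ki}}$ in the computational basis of $\cH_A$, let $C_i$ be the $R\times N$ matrix whose $k$-th row is the coordinate vector of $\ket{v_{ki}}$ in the computational basis of $\cH_B$. Then a direct computation gives $\r=\sum_{i,j}\ketbra{i}{j}\ox C_i^\dag C_j$, which is exactly the claimed factorization $(C_1,\ldots,C_M)^\dag\cdot(C_1,\ldots,C_M)$. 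Finally $\r_B=\tr_A(\r)=\sum_i C_i^\dag C_i$, and this has rank $N$ (hence is $>0$) precisely because $\r$ is an $M\times N$ state. This part should present no obstacle.

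For part (ii): Assume $R=N$ and $\r$ has RFRP. By the remark following \eqref{eq:mat-rho} (RFRP $\Leftrightarrow$ some diagonal block has rank $N$ in a suitable o.n.\ basis of $\cH_A$), after applying a unitary $A_0$ on $\cH_A$ we may assume $\s_{MM}=C_M^\dag C_M$ has rank $N$; since $C_M$ is $N\times N$, it is invertible. Now apply the ILO $I\ox B$ with $B:=C_M^{-1}$: this replaces each $C_i$ by $C_i C_M^{-1}$, and in particular replaces $C_M$ by $I_N$. (Here I use that under $\r\mapsto(A\ox B)^\dag\r(A\ox B)$ the matrices transform as $C_i\mapsto \bar A\, C_i\, B$ for the appropriate convention, or more simply that an ILO on the $B$ side right-multiplies all $C_i$ by the same invertible matrix, an easy direct check from \eqref{ro}.) Composing the two ILOs gives the desired $A\ox B$ with $\r'=(C_1,\ldots,C_{M-1},I_N)^\dag\cdot(C_1,\ldots,C_{M-1},I_N)$, relabelling $C_i C_M^{-1}$ as $C_i$.

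For the last sentence of (ii): write out $(\r')^\G$ using the formula preceding Lemma~\ref{le:triv-dist}, so its $(i,j)$ block is $C_j^\dag C_i$ for $i,j<M$, is $C_j^\dag$ for $i=M$, $j<M$, is $C_i$ for $i<M$, $j=M$, and is $I_N$ for $i=j=M$. The state $\r'$ is PPT iff $(\r')^\G\ge0$. I would argue this via a Schur-complement (block) argument with respect to the invertible $(M,M)$-block $I_N$: $(\r')^\G\ge0$ iff the Schur complement of that block is positive semidefinite. The off-diagonal blocks in the $M$-th row/column are the $C_j^\dag$ (and $C_i$), so the Schur complement has $(i,j)$-block $C_j^\dag C_i - C_i C_j^\dag$ for $i,j<M$. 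Now observe this matrix is itself a Gram-type expression: it equals $(D_1,\ldots,D_{M-1})^\dag(D_1,\ldots,D_{M-1})$ would be the wrong sign, so instead I note that $\bigl(C_j^\dag C_i - C_i C_j^\dag\bigr)_{i,j}$ is the matrix of the operator $\sum_{i,j}\ketbra{i}{j}\ox(C_j^\dag C_i - C_iC_j^\dag)$, which is $\tilde\r - \tilde\r^\G$-like; the cleanest route is: it is $\ge0$ iff $\sum_{i}\ketbra{i}{i}\ox C_i^\dag C_i + \text{(cross terms)}\ge \ldots$, i.e.\ compare $(C_1,\dots,C_{M-1})^\dag(C_1,\dots,C_{M-1})$ with its partial transpose and use that the difference is PSD iff it is zero (trace is $0$), which forces $C_j^\dag C_i = C_i C_j^\dag$ for all $i,j<M$. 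Taking $i=j$ gives $C_i^\dag C_i = C_iC_i^\dag$, so each $C_i$ is normal; then $C_j^\dag C_i=C_iC_j^\dag$ together with normality and the standard fact (quoted in the Introduction) that $[X,Y]=0$ implies $[X^\dag,Y]=0$ for normal $X$ yields $[C_i,C_j]=0$. The converse (pairwise commuting normals $\Rightarrow$ $(\r')^\G\ge0$) follows by simultaneously unitarily diagonalizing the $C_i$, which block-diagonalizes $(\r')^\G$ into $1\times1$-block-structured pieces that are manifestly rank-one PSD summands.

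The main obstacle I anticipate is the last sentence: getting the sign bookkeeping in the Schur-complement/partial-transpose comparison exactly right, and in particular justifying cleanly that a PSD matrix with zero trace must vanish in order to pass from "Schur complement $\ge0$" to the commuting-normal conclusion. Everything else is routine linear algebra.
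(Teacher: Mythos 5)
Your proof is correct. Parts (i) and the first assertion of (ii) follow the paper's own route exactly: spectral decomposition to obtain the Gram factorization (modulo a harmless conjugation convention in how the rows of the $C_i$ are defined), and RFRP together with $\rank\bra{x}\r\ket{x}=\rank\big(\sum_i x_iC_i\big)$ to make $C_M$ invertible and normalize it to $I_N$ via $I\ox C_M^{-1}$. Where you genuinely diverge is the ``moreover'' clause: the paper does not prove it at all, but cites \cite{hlv00} for the implication PPT $\Rightarrow$ pairwise commuting normal and calls the converse straightforward, whereas you supply a complete self-contained argument. Your Schur-complement computation is sound: the $(M,M)$ block of $(\r')^\G$ is $I_N$, so $(\r')^\G\ge0$ is equivalent to positivity of the Schur complement $S$ with blocks $S_{ij}=C_j^\dg C_i-C_iC_j^\dg$; since $\tr S=\sum_i\big(\tr(C_i^\dg C_i)-\tr(C_iC_i^\dg)\big)=0$, $S\ge0$ forces $S=0$ (a PSD matrix has nonnegative eigenvalues summing to its trace, so all vanish), i.e.\ $[C_i,C_j^\dg]=0$ for all $i,j<M$; taking $i=j$ gives normality, and then $[C_i,C_j]=0$ follows from the fact quoted in the Introduction applied to the normal matrix $C_j^\dg$. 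The converse by simultaneous unitary diagonalization of commuting normal matrices is likewise fine (it even exhibits $\r'$ as separable). The only criticism is cosmetic: the aborted detour in the middle of that paragraph (the wrong-sign Gram expression and the ``$\tilde\r-\tilde\r^\G$-like'' remark) should simply be deleted, since the trace-zero argument you land on is already the whole proof, and the worry you flag at the end is a one-line standard fact.
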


\bpf The assertion (i) follows from the spectral decomposition
theorem. Indeed, by that theorem we have
\begin{equation} \label{jed:ro-psi}
\r=\sum^R_{i=1} \proj{\psi_i}, \quad
\ket{\psi_i}=\sum^M_{j=1} \ket{j,\psi_{ij}},
\end{equation}
where $\ket{\psi_i}$ are non-normalized pure states. Since these
states span the range of $\r$, they must be linearly independent.
However, they are not uniquely determined by $\r$.
The matrices $C_j$ can be chosen as follows:
\begin{equation} \label{jed:mat-Cj}
C_j=\left( \ket{\psi_{1j}},\ldots,\ket{\psi_{Rj}} \right)^\dag,
\quad j=1,\ldots,M.
\end{equation}
In other words, $\bra{\psi_{ij}}$ is the $i$th row of $C_j$. Since
$\r_B\ge0$ and $\rank\r_B=M$, we have $\r_B>0$.

For the first assertion of (ii) we may assume that $C_M$ is
invertible and then apply the local operator $I_A \ox C_M^{-1}$.
It is shown in \cite{hlv00} that the PPT condition implies
that the $C_i$s are pairwise commuting normal matrices.
The converse is straightforward.
\epf

The next example shows that there exist $M\times N$ NPT states of
rank $N$ which violate RFRP.
\begin{example}
{\rm
For the (non-normalized) $3\times 3$ rank-3 antisymmetric state
\begin{equation} \label{jed:ro-as}
\r_{as}=\sum_{1\le i<j\le3}
    (\ket{ij}-\ket{ji})(\bra{ij}-\bra{ji}),
\end{equation}
the operator $\bra{x}\r_{as}\ket{x}$ has rank 2 for all
nonzero vectors $\ket{x}_A=\sum \xi_i \ket{i}\in\cH_A$.
This can be verified as follows.
We first compute the matrices $C_i$. In this example we can
choose $\ket{\psi_i}=\ket{jk}-\ket{kj}$ where $(i,j,k)$ is a
cyclic permutation of $(1,2,3)$. Then we find that
\begin{equation} \label{jed:Ci}
C_1=
\left(\begin{array}{ccc}0&0&0\\0&0&-1\\0&1&0\end{array}\right),
\quad C_2=
\left(\begin{array}{ccc}0&0&1\\0&0&0\\-1&0&0\end{array}\right),
\quad C_3=
\left(\begin{array}{ccc}0&-1&0\\1&0&0\\0&0&0\end{array}\right).
\end{equation}
Thus $\bra{x}\r_{as}\ket{x}=X^\dag X$ where $X=\sum_i \xi_i C_i$.
Since $X$ is antisymmetric its rank must be even, and so
both $X$ and $\bra{x}\r_{as}\ket{x}$ have rank 2.
}
\end{example}
For a more general class of examples violating RFRP see
section \ref{sec:full-rank}.
The problem of characterizing the states violating RFRP remains open.

Let us make two important observations. First, there is a
stronger version of the result (B) from the Introduction,
due to Clarisse \cite{clarisse05}.
Namely, he has shown that a bipartite state $\r$ which violates
the reduction criterion is in fact 1-distillable \cite{dss00}.
This is much stronger then being merely distillable.

Second, assume that a bipartite state $\s$ is obtained from $\rho$
by applying a local operator: $\s=(A\ox B)^\dag ~\r~ (A\ox B)$,
where $A\ox B$ may be singular. If $\s$ is 1-distillable
then $\r$ is also 1-distillable. Indeed if the Hermitian
operator $D$ is a 1-distillation witness detecting $\s$, i.e.,
$\tr(\s D)<0$ while $\tr(\s'D)\ge0$ for all 1-undistillable
states $\s'$, then $(A\ox B)~D~(A\ox B)^\dag$ is a 1-distillation
witness detecting $\r$.

One can easily verify that in all cases that arise in our proofs,
the above two observations garantee that the state $\r$ from which
we start is not only distillable but also 1-distillable. Hence
Theorems \ref{thm:distillation=nonFRP} and
\ref{thm:distillation=MxNrankN} remain valid when we
replace the word ``distillable'' with ``1-distillable''.

\section{\label{sec:MxNrankN}
Disitillability of $M\times N$ NPT states of rank $N$}

In this section we prove our main result
Theorem~\ref{thm:distillation=MxNrankN}. However, our first
objective is to show that the states which violate LFRP or RFRP are
distillable. Since PPT states are not distillable,
it follows that all PPT states must possess both LFRP and RFRP.

For convenience, we shall denote by $X[k]$ the submatrix of the
matrix $X$ consisting of the last $k$ columns. If a matrix $X$
factorizes as $X=YZ$, then we shall say that $Z$ is a
{\em right factor} of $X$.

 \bt
\label{thm:distillation=nonFRP}
Bipartite states which violate LFRP or RFRP are distillable.
 \et
 \bpf
It suffices to prove that any $M\times N$ state $\r$ which violates
RFRP is distillable. Let $R$ be the rank of $\r$. If $R<N$, then
$\r$ is distillable by result (C). From now on we assume that $R\ge
N$. By Proposition \ref{pp:pocetak} we have
$\r=(C_1,\ldots,C_M)^\dag \cdot (C_1,\ldots,C_M)$ where each matrix
$C_i$ is of size $ R\times N$, and $\r_B=\sum_i C_i^\dag C_i>0$.

We can replace $(C_1,\ldots,C_M)$ with $U(C_1,\ldots,C_M)$ where $U$ is a unitary matrix, without changing $\r$.
The effect of an invertible local transformation
$\r\to(I\ox B)^\dag~\r~(I\ox B)$ is to replace each $C_i$ by $C_iB$.
Recall that RFRP is preserved by these local transformations.
In order to prove the theorem we can apply these
kind of transformations as many times as needed.

Since $\r_A$ is invertible, each $C_i\ne0$. Let $r_1$ be the
rank of $C_1$. By multiplying $(C_1,\ldots,C_M)$ by a unitary matrix
$U_1$ on the left hand side, we may assume that the last
$R-r_1$ rows of $C_1$ are zero. We choose an invertible matrix
$B_1$ such that $C_1 B_1=I_{r_1}\oplus0$. By multiplying all
$C_i$s by $B_1$ on the right hand side, we may assume that
$C_1=I_{r_1}\oplus0$. Since $\r$ violates RFRP, we have $r_1<N$.

For a sequence of indexes $1\le i_1<i_2<\cdots<i_k\le M$ we
denote by $\r_{i_1,\ldots,i_k}$ the corresponding principal
submatrix of $\r$ of size $kN\times kN$, i.e.,
$$ \r_{i_1,\ldots,i_k}=(C_{i_1},\ldots,C_{i_k})^\dag \cdot
(C_{i_1},\ldots,C_{i_k}). $$
The corresponding principal submatrix of
$\r^\G$ will be denoted by $\r_{i_1,\ldots,i_k}^\G$. For instance,
for $i>1$ we have
$$ \r_{1,i}=( C_1, C_i )^\dag \cdot ( C_1, C_i )=
\left(\begin{array}{cc} C_1^\dag C_1 & C_1^\dag C_i \\ C_i^\dag C_1
& C_i^\dag C_i
\end{array} \right) $$
and
$$ \r_{1,i}^\G = \left(\begin{array}{cc}
C_1^\dag C_1 & C_i^\dag C_1 \\ C_1^\dag C_i & C_i^\dag C_i
\end{array} \right). $$

We split each $C_i$ into four blocks $C_i=\left(\begin{array}{cc}
C_{i1} & C_{i2} \\ C_{i3} & C_{i4} \end{array}\right)$ with $C_{i1}$
square of size $r_1$. Since $C_1=I_{r_1}\oplus0$, we have
$$ \r_{1,i} = \left(\begin{array}{cccc}
I_{r_1} & 0 & C_{i1} & C_{i2} \\
0 & 0 & 0 & 0 \\
C_{i1}^\dag & 0 & * & * \\
C_{i2}^\dag & 0 & * & *
\end{array}\right),\quad i>1, $$
where the asterisk stands for an unspecified block.
If some $C_{i2}\ne0$, then $\r$ is trivially distillable.
Thus we may assume that all $C_{i2}=0$.

Since $\r_B$ has rank $N$, its submatrix $\r_B[N-r_1]$ must have rank
$N-r_1$. Since  $C_{i2}=0$,  $C_{i4}$ is a right factor of the submatrix $(C_i^\dag C_i)[N-r_1]$. Therefore $C_{i4}\ne0$ for at
least one index $i>1$. By permuting the $C_i$s with $i>1$, we may
assume that $C_{24}\ne0$. Let $r_2$ be its rank. By multiplying
$(C_1,\ldots,C_M)$ by a unitary matrix $I_{r_1}\oplus U_2$ on the
left hand side, we may assume that the last $R-r_1-r_2$ rows of
$C_{24}$ are zero. Let $B_2$ be an invertible matrix such that
$C_{24}B_2=I_{r_2}\oplus0$. By multiplying each $C_i$ by
$I_{r_1}\oplus B_2$ on the right hand side, we may
assume that $C_{24}=I_{r_2}\oplus0$. Note that these operations
performed on $(C_1,\ldots,C_M)$ do not alter $C_1$ and that the
equalities $C_{i2}=0$ remain valid.

Assume that $r_1+r_2<N$. We split each $C_{i4}$ into four blocks
$C_{i4}=\left( \begin{array}{cc} C_{i41} & C_{i42} \\ C_{i43} &
C_{i44} \end{array} \right)$, with $C_{i41}$ square of size $r_2$.
For $i>2$ we have
$$  \r_{2,i}=\left(\begin{array}{cc}
C_2^\dag C_2 & C_2^\dag C_i \\ C_i^\dag C_2 & C_i^\dag C_i
\end{array} \right) =
\left(\begin{array}{cccc}
* & C_{23}^\dag C_{24} & * & C_{23}^\dag C_{i4} \\
C_{24}^\dag C_{23} & C_{24}^\dag C_{24} & C_{24}^\dag C_{i3} &
C_{24}^\dag C_{i4} \\
* & C_{i3}^\dag C_{24} & * & C_{i3}^\dag C_{i4} \\
C_{i4}^\dag C_{23} & C_{i4}^\dag C_{24} & C_{i4}^\dag C_{i3} &
C_{i4}^\dag C_{i4}
\end{array}\right). $$
We extract from it the principal submatrix
$$ \left(\begin{array}{cc}
C_{24}^\dag C_{24} & C_{24}^\dag C_{i4} \\
C_{i4}^\dag C_{24} & C_{i4}^\dag C_{i4}
\end{array}\right)=
\left(\begin{array}{cccc}
I_{r_2} & 0 & C_{i41} & C_{i42} \\
0 & 0 & 0 & 0 \\
C_{i41}^\dag & 0 & * & * \\
C_{i42}^\dag & 0 & * & *
\end{array}\right). $$
If some $C_{i42}\ne0$, then $\r$ is trivially distillable.
Thus we may assume that all $C_{i42}=0$.

Since $\r_B$ has rank $N$, the matrix $\r_B[N-r_1-r_2]$ must have
rank $N-r_1-r_2$. Since  $C_{i2}=0$ and $C_{i42}=0$, we have
$$ C_i=\left(\begin{array}{ccc}
C_{i1} & 0 & 0 \\ * & C_{i41} & 0 \\ * & C_{i43} & C_{i44}
\end{array}\right). $$
Consequently, $C_{i44}$ is a right factor of the submatrix
$(C_i^\dag C_i)[N-r_1-r_2]$. Therefore $C_{i44}\ne0$ for at least
one index $i>2$. By permuting the $C_i$s with $i>2$, we may assume
that $C_{344}\ne0$. Let $r_3$ be its rank.
By multiplying $(C_1,C_2,\ldots,C_M)$ by a unitary matrix
$I_{r_1+r_2}\oplus U_3$ on the left hand side , we may assume that
the last $R-r_1-r_2-r_3$ rows of $C_{344}$ are zero. Let $B_3$ be
an invertible matrix such that $C_{344}B_3=I_{r_3}\oplus0$. By
multiplying each $C_i$ by $I_{r_1+r_2}\oplus B_3$ on the right
hand side, we may assume that $C_{344}=I_{r_3}\oplus0$.
Note that these operations performed on $(C_1,C_2,\ldots,C_M)$
alter neither $C_1$ nor $C_{24}$ and that the equalities
$C_{i2}=0$ and $C_{i42}=0$ remain valid.

If $r_1+r_2+r_3<N$ we can continue by splitting each $C_{i44}$
into four blocks, etc.
This process terminates as soon as $\r$ has been shown to be
trivially distillable. Otherwise it can be continued as long as
$r_1+\cdots+r_k<N$ and $k<M$. However, note that if $k$ becomes
equal to $M$ then we must have $r_1+\cdots+r_M=N$ because
$\rank\r_B=N$.

We claim that the process must terminate while the inequality
$r_1+\cdots+r_k<N$ is still valid. Indeed assume that we reach
a point where $r_1+\cdots+r_k=N$. Since $k\le M$ we can set
$\ket{x}=t_1 \ket{1}+\cdots+t_k\ket{k}\in\cH_A$,
where $t_1,\ldots,t_k$ are real parameters. Then we have
$$ \bra{x}\r\ket{x}=\sum_{i,j=1}^k t_i t_j C_i^\dag C_j =
\left(\sum_{i=1}^k t_i C_i\right)^\dag
\left(\sum_{i=1}^k t_i C_i\right). $$
It follows that
$$ \det \bra{x}\r\ket{x} =
\left|\det \left( \sum_{i=1}^k t_i C_i \right) \right|^2. $$
The determinant on the left hand side is a polynomial in the
parameters $t_1,\ldots,t_k$. As $C_{i2}=0$, $C_{i42}=0,\ldots$
for all $i$, the coefficient of $t_1^{2r_1}t_2^{2r_2}\cdots t_k^{2r_k}$
in this determinant is equal to 1. Hence, we can choose the values
of the parameters $t_i$ so that the operator $\bra{x}\r\ket{x}$
becomes invertible. This contradicts the hypothesis that $\r$ violates RFRP, and completes the proof.
 \epf

Let us also record the observation we made earlier.
\bcr
 \label{cor:pptFRP}
All bipartite PPT states possess LFRP and RFRP.
\ecr
For $M\times N$ states with $M\le N$ and rank $N$ this was
proved in \cite{hlv00}.

By using Theorem~\ref{thm:distillation=nonFRP} we can now prove our
main result.
 \bt
 \label{thm:distillation=MxNrankN}
The $M \times N$ NPT states of rank $N$ are distillable under
LOCC.
 \et

 \bpf
By the result (C) we may assume that $M\le N$. In view of
Theorem~\ref{thm:distillation=nonFRP}, it suffices to prove the
assertion for $M\times N$ NPT states $\r$ of rank $N$ which have
RFRP. Since $\r$ is NPT, we must have $M\ge2$. By Proposition
\ref{pp:pocetak} we can assume that $
\r=(C_1,\ldots,C_{M-1},I_N)^\dg\cdot (C_1,\ldots,C_{M-1},I_N)$ where
$C_i$ are $N\times N$ matrices. Let $\r_i=(P_i\ox I)~\r~(P_i\ox I)$
where $P_i=\proj{i}+\proj{M}$, $i<M$. If some $\r_i$ is NPT, then
$\r_i$ (and $\r$) is distillable by the result (A). Thus we may
assume that all $\r_i$ are PPT. By applying Proposition
\ref{pp:pocetak} (ii) to $\r_i$, we deduce that $C_i$ is a normal
matrix. Since $\r$ is NPT, the same proposition implies that there
exist $i,j$ such that $[C_i,C_j]\ne0$. In particular, $M\ge3$. We
may assume that $i=1$ and $j=2$.

Let $\r'=(V\ox I_N)^\dag ~\r~ (V\ox I_N)$ where
$V=(x\ket{1}+\ket{2})\bra{2}+\proj{M}$ and $x$ is a complex
parameter. Obviously we have
$$ (C_1,\ldots,C_{M-1},I_N)(V\ox I_N)=(0,X,0,\ldots,0,I_N), $$
where $X=x C_1+C_2$. Since the range of $\r'_A$ is contained in
the subspace spanned by $\ket{2}_A$ and $\ket{M}_A$, we can view
$\r'$ as a state acting on a $2\ox N$ space. Then its density matrix
is $\r'=(X,I_N)^\dag\cdot(X,I_N)$ and we have
$$ \r'_A=\left(\begin{array}{cc}
\tr (X^\dag X) & \tr (X^\dag) \\ \tr (X) & N \end{array} \right). $$
Hence $\det(\r'_A)=N\tr(X^\dag X)-|\tr(X)|^2$.
Since $[C_1,C_2]\ne0$, the matrices $X$ and $I_N$ are linearly
independent. Consequently, $\det(\r'_A)>0$ by the Cauchy--Schwarz
inequality. Since $\r'_B=I_N+X^\dag X>0$, we conclude that $\r'$ is a $2\times N$ state for all $x$. Evidently its rank is $N$.

Assume that $\r'$ is PPT for all values of $x$. By Proposition
\ref{pp:pocetak}, the matrix $X=x C_1 + C_2$ is normal. Since $x$
is arbitrary and both $C_1$ and $C_2$ are normal, we deduce that
$x[C_1,C_2^\dag]=x^*[C_1^\dag,C_2]$. By setting first $x=1$ and then
that $x$ is the imaginary unit, we deduce that $[C_1,C_2^\dag]=0$.
Consequently, also $[C_1,C_2]=0$ because $C_2$ is normal, and
we have a contradiction. We conclude that $\r'$ must be NPT for
at least one value of $x$. Such $\r'$ is distillable by the
result (A) and, consequently, $\r$ is also distillable.
This completes the proof.
 \epf

This theorem gives a new family of bipartite distillable states
under LOCC. In particular, for $M=N$ we obtain a positive answer to
an open problem proposed in \cite{hst03}.

\section{\label{sec:reducibleirreducible}
Reducible and irreducible states}

In the previous sections we have shown that the states violating
LFRP or RFRP are distillable and discussed several applications of
this result. In this and subsequent sections we would like to explore
further the structure of states having both full-rank properties.
This problem is important for two main reasons.
First, most bipartite quantum states have LFRP and RFRP, i.e., this
is a generic property of the state space (see section
\ref{sec:full-rank}
Second, by Theorem~\ref{thm:distillation=MxNrankN},
the distillability problem for bipartite states reduces to the
case of $M\times N$ states of rank bigger than $\max(M,N)$.
We shall introduce some new concepts such as reducible states,
and use them to construct new classes of distillable states.
They complement Theorems \ref{thm:distillation=nonFRP} and
\ref{thm:distillation=MxNrankN}.

We begin with the definition of reducible and irreducible states.
 \bd
\label{def:red} The sum $\r$ of bipartite states $\r_i$ is {\em
B-direct} if $\cR(\r_B)$ is a direct sum of $\cR((\r_i)_B)$. A
bipartite state $\r$ is {\em reducible} if it is a B-direct sum of
two states and otherwise $\r$ is {\em irreducible}. We denote them
by $\r_{re}$ and  $\r_{ir}$, respectively.
 \ed

It is clear that if $A\ox B$ is an arbitrary ILO, then $\r$ is
reducible if and only if $(A\ox B)~\r~(A\ox B)^\dag$ is reducible.

We observe that in the case $N=3$ the reducible NPT states are
distillable. \bl
 \label{le:Mx3reducibleNPT}
Any $M\times3$ reducible NPT state is distillable. \el
 \bpf
Such a state $\r$ is a B-direct sum of two states with B-local ranks
1 and 2. Since the one with B-local rank 1 is separable, the other
one must be NPT and so it is distillable by result (A).
Consequently, $\r$ is distillable too.
 \epf

The following lemma is obvious.
 \bl
\label{le:reducible=SUMirreducible} Any bipartite state is a finite
B-direct sum of irreducible states.
 \el

We point out that this decomposition into irreducibles is not unique
in general. For instance any $1\times2$ state $\r$ has infinitely
many B-direct decompositions $\r=\r_1+\r_2$, where $\r_1$ and $\r_2$
are product states. Here is another more interesting example with
entangled irreducible summands.
\begin{example}
{\rm Let $\r$ be the $4\times4$ (non-normalized) state which is by
definition the B-direct sum $\r=\r_1+\r_2$ of two irreducible states
$\r_1=2\proj{\phi_1}$ and $\r_2=2\proj{\phi_2}$ where
$\phi_1=\ket{11}+\ket{22}$ and $\phi_2=\ket{13}+\ket{24}$. It admits
another such decomposition $\r=\r'_1+\r'_2$, where
$\r'_1=\proj{\phi'_1}$ and $\r'_2=\proj{\phi'_2}$ with
$$ \phi'_1=\ket{11}+\ket{13}+\ket{22}+\ket{24} \quad \text{and}
\quad \phi'_2=\ket{11}-\ket{13}+\ket{22}-\ket{24}. $$ }
\end{example}

The reducible states have the following important property.

\bpp \label{pp:def-red} Let $\r=\r_1+\r_2$ (a B-direct sum) be a
reducible $M\times N$ state. Then there exists a Hermitian operator
$V>0$ on $\cH_B$ such that the states $\r'_i=(I\ox V)~\r_i~(I\ox V)$
$(i=1,2)$ have orthogonal ranges, i.e., $(\r'_1)_B (\r'_2)_B=0$.
\epp \bpf For convenience, set $\s_i=(\r_i)_B$. Since
$\cH_B=\cR(\s_1)\oplus\cR(\s_2)$, we have $\s:=\s_1+\s_2>0$. We set
$V=\s^{-1/2}$ and $\s'_i=V\s_i V$. Since $\s'_1+\s'_2=I_N$ and
$\rank(I_N-\s'_1)=\rank(\s'_2)=N-\rank (\s'_1)$, it follows that
$\s'_1$ is an orthogonal projector. Hence
$\s'_1\s'_2=\s'_1(I_N-\s'_1)=0$. For $\r'_i=(I\ox V)~\r_i~(I\ox V)$,
we have $(\r'_i)_B=V(\r_i)_B V=V\s_i V=\s'_i$ and so
$(\r'_1)_B(\r'_2)_B=\s'_1\s'_2=0$. \epf

\bcr \label{cor:reducible=SUMirreducible} If $\r=\sum_i\r_i$ is a
B-direct sum, then $\r$ is separable (PPT) if and only if every
$\r_i$ is separable (PPT). In particular, the reducible $3\times3$
state $\r$ is separable if and only if it is PPT.
\ecr

For convenience, we refer to the following states \cite[Sec.
IV]{hhh98acta} as label states
$$ \r_{la}:=\sum_i p_i\proj{a_i,b_i}_{A_1 B_1} \ox
\proj{\psi_i}_{A_2 B_2}, $$ where the product states $\ket{a_i,b_i}$
are distinguishable via LOCC. In particular, when
$\ket{a_i,b_i}=\ket{1,i}$, the state $\r_{la}$ becomes a reducible
state. It is also known that three fundamental entanglement
measures, i.e., the distillable entanglement $E_d$, entanglement
cost $E_c$ and entanglement of formation $E_f$ are all equal for
such states \cite{hhh98acta}. Indeed, because $\ket{a_i,b_i}$ are
distinguishable via LOCC, by measurements Alice and Bob can get each
pure state $\ket{\psi_i}$ with probability $p_i$. Therefore
$E_d(\r_{la})=\sum_i p_i S(\tr_A \proj{\ps_i})$, where $S(\r)$ is
the von Neumann entropy. On the other hand by the definition of
$E_c$ and $E_f$ \cite{bds96}, we have the inequalities
$$E_d(\r_{la}) \le E_c(\r_{la}) \le E_f(\r_{la}) \le \sum_i p_i
S({\tr}_A \proj{\ps_i}). $$ So the three entanglement measures
coincide.

One can prove a similar fact for reducible states $\r=\sum_i
p_i\r_i$, provided that $(\r_i)_B (\r_j)_B = 0$ for $i \ne j$. We
assume here that the state $\r$ and the $\r_i$ are all normalized,
$p_i>0$, $\sum_i p_i=1$, and the sum is B-direct. Thus Alice and Bob
can get each state $\r_i$ with probability $p_i$. We have
 \bea
 \label{ea:threemeasures}
 E_d(\r_{re}) = \sum_i p_i E_d(\r_i) \le E_c(\r_{re}) \le
 E_f(\r_{re}) \le \sum_i p_i E_f(\r_i).
 \eea
Evidently, the three entanglement measures coincide if and only if
$E_d(\r_i)=E_f(\r_i)$ for all $i$, e.g., this is satisfied for the
label states \footnotemark. \footnotetext{However, the equality does
not hold for most of the known mixed bipartite states. The reason is
that, according to the computable results, the distillable
entanglement is smaller than the entanglement cost. It is also an
important open problem to decide whether the equality holds only for
the label states (including the pure states).} Thus we have the
following result.

\bpp  \label{pp:reducible=SUMirreducible}
If $\r=\sum_i\r_i$ is a B-direct sum, then $\r$ is distillable if and only if at least one of the $\r_i$ is distillable.
\epp
 \bpf
By Proposition \ref{pp:def-red}, we may assume that the states
$\r_i$ are distinguishable by LOCC. From the equality in Eq.
(\ref{ea:threemeasures}) we see that the inequality $E_d(\r_{re})>0$
holds if and only if $E_d(\r_i)>0$ for at least one index $i$.
 \epf

Together with Lemma~\ref{le:reducible=SUMirreducible},
this proposition shows that the question of deciding whether
an arbitrary bipartite state is distillable reduces to the case
of irreducible states.
However, the distillability problem for irreducible states is hard.
We tackle a special case in the following proposition.

 \bpp
 \label{pp:irreducibledistillable,commonkernel}
Let $\r$ be an irreducible $M\times N$ state such that
$\cH'_A\ox\ket{b}\subseteq\ker\r$ for some $(M-1)$-dimensional
subspace $\cH'_A\subseteq\cH_A$ and some state $\ket{b}\in\cH_B$.
Then $\r$ is distillable.
 \epp
 \bpf
We may assume that $\cH'_A$ is spanned by the basis vectors
$\ket{i}_A$, $i>1$, and that $\ket{b}=\ket{1}_B$. By Proposition
\ref{pp:pocetak}, we have
$\r=(C_1,\ldots,C_M)^\dg\cdot(C_1,\ldots,C_M)$, where the $C_i$ are
$R\times N$ matrices and $R=\rank\r$. Moreover, the first columns of
the $C_i$ are 0 for $i>1$. Since $\r_B$ is invertible, the first
column of $C_1$ is not 0. By multiplying $(C_1,\ldots,C_M)$ by a
unitary matrix on the left hand side, we may assume that only the
first component of the first column of $C_1$ is nonzero. Clearly, we
can also assume that the same is true for the first row of $C_1$.
Since $\r$ is irreducible, at least one of the first rows of the
$C_i$, $i>1$, must be nonzero. It follows that $\r$ is trivially
distillable. \epf

In view of Lemma \ref{le:Mx3reducibleNPT}, we have the following
corollary.
\bcr \label{cor:Mx3-NPT->dist} If $N=3$ we can remove the
irreducibility hypothesis from Proposition
\ref{pp:irreducibledistillable,commonkernel}.
\ecr

It is easy to construct examples of bipartite states having LFRP and
RFRP which can be shown to be distillable by
Proposition \ref{pp:irreducibledistillable,commonkernel}.
Clearly, Theorem~\ref{thm:distillation=nonFRP} cannot be applied
to such states.

\section{Separability criterion for bipartite states of rank 4}

The problem of deciding whether a state is separable has been shown
to be NP-hard \cite{gurvits03}, and hence the progress in solving
instances of this problem is significant for both the quantum
information theory and computer science. Most of the recent
contributions focus on the numerical methods with objective to
improve the efficiency in some special cases \cite{hm10}.

On the other hand the PPT condition, a celebrated decision criterion, is a necessary condition for separability of arbitrary $M\times N$
states. It is sufficient only if $MN\le6$. It is also sufficient for
some other classes of bipartite states. For instance, this is the
case when the rank, $R$, of $\r$ satisfies the inequality
$R\le\max(M,N)$. However, note that if $R<\max(M,N)$ then $\r$ is
NPT by result (C). Thus, the PPT condition forces the inequality
$\max(M,N)\le R$. If $\max(M,N)=R$, then the above result applies
and it remains to consider only the case $\max(M,N)<R$. We
infer that if $R\le3$ then $\r$ is separable if and only if it is
PPT, i.e., there are no PPT entangled states with $R\le3$. However,
when $R=4$ and $M=N=3$ such states do exist and the problem arises
to decide whether a given $3\times3$ PPT state of rank 4 is separable.
In this section we provide a simple answer to this question, and thereby obtain a criterion for separability of arbitrary bipartite
states of rank 4.

We start with an easy observation.
 \bl
\label{le:3x3rank4,rankC=1}
Let $\r$ be a $3\times N$ state such that, for some
$\ket{a}\in\cH_A$, $\rank \bra{a}\r\ket{a}=1$.
If $\r$ is NPT then it is distillable. If $\r$ is PPT and
$N=3$, then $\r$ is separable.
 \el
 \bpf
We shall prove both assertions at the same time.
We may assume that $\ket{a}=\ket{1}$. Consequently, we have
$\r=(C_1,C_2,C_3)^\dag\cdot(C_1,C_2,C_3)$, where the blocks $C_i$
are $R\times N$, $R$ is the rank of $\r$, and $C_1$ has rank 1.
After multiplying $(C_1,C_2,C_3)$ by a unitary matrix on the left
hand side and multiplying each $C_i$ by the same invertible matrix
on the right hand side, we may assume that
\begin{equation} \label{TriCi}
C_1=
 \left(\begin{array}{cc}
    1 & 0
 \\ 0 & 0
 \end{array}\right),
 \quad
 C_2=
 \left(\begin{array}{cc}
    x & u
 \\ C_{21} & C_{22}
 \end{array}\right),
 \quad
 C_3=
 \left(\begin{array}{cc}
    y & v
 \\ C_{31} & C_{32}
 \end{array}\right),
\end{equation}
where the first blocks are $1\times1$. If $u\ne0$ or $v\ne0$ then
$\r$ is trivially distillable and so NPT. Thus we may assume
that $u=v=0$. By applying an ILO on $\cH_A$, we may also assume
that $x=y=0$. After switching the parties A and B, $\r$ becomes
reducible. Hence, if $\r$ is NPT it is distillable by
Lemma~\ref{le:Mx3reducibleNPT}, and otherwise it is separable
by Corollary~\ref{cor:reducible=SUMirreducible}.
 \epf

The following proposition is the crucial step in the proof of
our separability criterion.
 \bpp
 \label{pp:3x3rank4,productstate}
Let $\r$ be a $3\times 3$ state of rank 4 containing at least one
product state in its range. Then $\r$ is either distillable or
separable. (In the former case it is NPT and in the latter PPT.)
Equivalently, $\r$ cannot be PPT and entangled.
 \epp
 \bpf
If $\r$ is reducible, then the assertion follows from
Lemma~\ref{le:Mx3reducibleNPT} and
Corollary~\ref{cor:reducible=SUMirreducible}.
Hence, we may assume that $\r$ is irreducible.

We can write $\r$ as $\r=\sum_{i=1}^4 \ketbra{\psi_i}{\psi_i}$,
where $\ket{\psi_1}$ a product state. This gives the factorization
$\r=(C_1,C_2,C_3)^\dag\cdot(C_1,C_2,C_3)$ with the blocks $C_i$ of
size $4\times3$. By applying an ILO, we may assume that
 $$ C_1=
 \left(\begin{array}{cc}
    1 & 0
 \\ C_{11} & C_{12}
 \end{array}\right),
 \quad
 C_2=
 \left(\begin{array}{cc}
    0 & 0
 \\ C_{21} & C_{22}
 \end{array}\right),
 \quad
 C_3=
 \left(\begin{array}{cc}
    0 & 0
 \\ C_{31} & C_{32}
 \end{array}\right),
 $$
where the blocks $C_{i2}$ are of size $3\times2$. If the projected
state $\r':=(C_{12},C_{22},C_{32})^\dag\cdot(C_{12},C_{22},C_{32})$
is entangled, then both $\r'$ and $\r$ are distillable. So, we may assume that $\r'$ is separable.
Since a separable state of rank 3 is a sum of 3 product states
(see Proposition \ref{pp:PPTrankN}),
we may assume that $C_{i2}=D_iC$, $i=1,2,3$, where $C$ is a
$3\times2$ matrix and the $D_i$ are diagonal matrices.
If $D_2$ and $D_3$ are linearly dependent, then we can assume that
one of them is 0 and so Lemma~\ref{le:3x3rank4,rankC=1} implies that
$\r$ is distillable or separable.
Thus we may assume that $D_2$ and $D_3$ are linearly independent.
They remain independent after removing one of
the rows in each of them, say the first row. By using an ILO on
system A, we may assume that the diagonal entries of $D_1$, $D_2$
and $D_3$ are $(d_1,0,0)$, $(d_2,1,0)$ and $(d_3,0,1)$,
respectively. The matrices $C_i$ now have the form
 $$ C_1=
 \left(\begin{array}{ccc}
    1 & 0  & 0
  \\ u_1 & d_1 \a & d_1 \b
  \\ u_2 & 0 & 0
  \\ u_3 & 0 & 0
 \end{array}\right),
 \quad
 C_2=
 \left(\begin{array}{ccc}
    0 & 0  & 0
  \\ v_1 & d_2 \a & d_2 \b
  \\ v_2 & \gamma & \delta
  \\ v_3 & 0 & 0
 \end{array}\right),
 \quad
 C_3=
 \left(\begin{array}{ccc}
    0 & 0  & 0
  \\ w_1 & d_3 \a & d_3 \b
  \\ w_2 & 0 & 0
  \\ w_3 & \epsilon & \zeta
 \end{array}\right).
 $$
If $d_1=0$ or $\a=\b=0$, we can again use
Lemma~\ref{le:3x3rank4,rankC=1}.
Thus, by using an ILO on system B, we may assume that $d_1=\a=1$
and $\b=0$, as well as that $u_1=0$. One of $\delta$
and $\zeta$ must be nonzero, and so we may assume that, say
$\delta=1$, as well as $v_2=\gamma=0$.

For convenience, denote by $\r_{ij}$ the projected state
$(C_i,C_j)^\dag\cdot(C_i,C_j)$, $1\le i<j\le3$.
If $u_2\ne0$ then $\r_{12}$ is trivially distillable, and so we may
assume that $u_2=0$. Now we have
 $$ C_1=
 \left(\begin{array}{ccc}
     1 & 0 & 0
  \\ 0 & 1 & 0
  \\ 0 & 0 & 0
  \\ u_3 & 0 & 0
 \end{array}\right),
 \quad
 C_2=
 \left(\begin{array}{ccc}
     0 & 0  & 0
  \\ v_1 & d_2 & 0
  \\ 0 & 0 & 1
  \\ v_3 & 0 & 0
 \end{array}\right),
 \quad
 C_3=
 \left(\begin{array}{ccc}
     0 & 0  & 0
  \\ w_1 & d_3 & 0
  \\ w_2 & 0 & 0
  \\ w_3 & \epsilon & \zeta
 \end{array}\right).
 $$
By subtracting from $C_2$ and $C_3$ suitable multiples of $C_1$, we
may assume that
$$ C_2=\left(\begin{array}{ccc}
     -d_2 & 0  & 0
  \\ v_1 & 0 & 0
  \\ 0 & 0 & 1
  \\ v'_3 & 0 & 0
 \end{array}\right), \quad
 C_3=
 \left(\begin{array}{ccc}
     -d_3 & 0  & 0
  \\ w_1 & 0 & 0
  \\ w_2 & 0 & 0
  \\ w'_3 & \epsilon & \zeta
 \end{array}\right). $$
If $v_1\ne0$ then $\r_{12}$ is trivially distillable, so we may
assume that $v_1=0$.

Assume that $\zeta=0$. Since $\r$ is irreducible, we must have
$w_2\ne0$ and so $\rho_{23}$ is trivially distillable.
Assume now that $\zeta\ne0$.
If $u_3\ne0$ then $\rho_{13}$ is trivially distillable,
so we may assume that $u_3=0$. If $w_1\ne0$ then it is easy to see
that the state $\r_{13}$ is distillable. Indeed, for that purpose
we may assume that $\epsilon=0$ and then $\rho_{13}$ becomes
trivially distillable. Thus we may assume that also $w_1=0$, and so
 $$ C_1=\left(\begin{array}{ccc}
     1 & 0 & 0
  \\ 0 & 1 & 0
  \\ 0 & 0 & 0
  \\ 0 & 0 & 0
 \end{array}\right), \quad
 C_2=\left(\begin{array}{ccc}
     -d_2 & 0  & 0
  \\ 0 & 0 & 0
  \\ 0 & 0 & 1
  \\ v'_3 & 0 & 0
 \end{array}\right), \quad
 C_3= \left(\begin{array}{ccc}
     -d_3 & 0  & 0
  \\ 0 & 0 & 0
  \\ w_2 & 0 & 0
  \\ w'_3 & \epsilon & \zeta
 \end{array}\right). $$

Since $\r$ is irreducible, we must have $\epsilon\ne0$. If
$v'_3\ne0$ then $\r_{23}$ is trivially distillable. Thus we assume
that $v'_3=0$. If $w_2=0$ then $\r$ is separable. Thus we assume
that $w_2\ne0$. Then we examine the projected state $\r_{23}$.
As $\epsilon\ne0$, we can kill $w'_3$ and $\zeta$ in $C_3$ to obtain
a trivially distillable state. This completes the proof.
 \epf

Note that the above proof actually shows that the NPT states
treated in this proposition are in fact 1-distillable.

We now present our separability criterion for bipartite states of
rank 4.
 \bt
 \label{thm:PPTrank4}
A bipartite state of rank 4 is separable if and only if it is
PPT and its range contains at least one product state.
 \et

\bpf
The conditions are obviously necessary. To prove the sufficiency,
let $\r$ be an $M\times N$ bipartite state of rank 4. By the
result (C), we must have $M,N\le4$. If $\max(M,N)=4$ then
Proposition \ref{pp:PPTrankN} shows that $\r$ is separable.
In view of the Peres-Horodecki criterion, it remains to
consider the case $M=N=3$. Hence, we can invoke Proposition
\ref{pp:3x3rank4,productstate} to complete the proof.
\epf

As an open question we ask whether Proposition
\ref{pp:3x3rank4,productstate}
can be extended to some other classes of PPT states, such as
$2\times4$ PPT entangled states of rank 5. Such states do exist
\cite{horodecki97} (see also  \cite[Eq. (6)]{agk09}).

Hence to decide whether a $3\times3$ PPT state $\r$ of rank 4 is
separable, one just has to check whether there is a product state in
the range of $\r$. This is numerically operational because the
dimension is low. For example, one can refer to the discussion in
\cite[Sec. IV]{hlv00}. Here we present the analytical approach to
this problem. We only need to consider the linear
combination of four linearly independent $3\times3$ matrices
$A=[a_{ij}]$, $B=[b_{ij}]$, $C=[c_{ij}]$, $D=[d_{ij}]$.
The problem is to decide whether the variables $w,x,y,z$ can be
chosen so that the matrix
$$ E=wA+xB+yC+zD $$
has rank 1. This is equivalent to the requirement that each
$2\times2$ minor of $E$ vanishes. Mathematically, we have a set of nine equations
 \bea
 \label{ea:3x3}
 &&(w a_{ij}+xb_{ij}+yc_{ij}+zd_{ij})
(wa_{kl}+xb_{kl}+yc_{kl}+zd_{kl}) \nonumber\\
 &&-(wa_{il}+x b_{il}+yc_{il}+zd_{il})
(wa_{kj}+xb_{kj}+yc_{kj}+zd_{kj})=0,
 \eea
with $i<k$ and $j<l$.
Thus $\rank E=1$ if and only if these equations have a nonzero
solution for $w,x,y,z$.

It is instructive to take a look at the following example
which requires more algebraic background.

\begin{example}
{\rm
Let us consider the simpler problem of deciding whether a
2-dimensional subspace $V$ of a $2\ox 3$ system contains
a product vector. Let $\{\ket{a},\ket{b}\}$ be an arbitrary
basis of $V$. Thus
$$ \ket{a}=\sum_{i,j} a_{ij}\ket{i,j}, \quad
\ket{b}=\sum_{i,j} b_{ij}\ket{i,j}, $$
where $i=1,2$ and $j=1,2,3$. Let us form the $2\times6$ matrix
from the components of these two states:
$$
P=\left( \begin{array}{cccccc}
a_{11} & a_{12} & a_{13} & a_{21} & a_{22} & a_{23} \\
b_{11} & b_{12} & b_{13} & b_{21} & b_{22} & b_{23}
\end{array} \right). $$
Denote by $P^{ij}$, $i<j$, the $2\times2$ submatrix
of $P$ made up from the $i$th and $j$th columns.
The Pl{\"u}cker coordinates of $V$ are the 15 determinants
$p_{ij}=\det P^{ij}$. If we change the basis, the Pl{\"u}cker
coordinates of $V$ will be changed only by an overall factor.
We point out that the Pl{\"u}cker coordinates are algebraically
dependent. For instance, we have
$p_{12}p_{34}-p_{13}p_{24}+p_{14}p_{23}=0$.
A 2-dimensional subspace $V$ can be viewed as a point
of the so called Grassmann variety (or Grassmannian) $G_{2,6}$.
The set of the points $V$ which contain a product vector form a
closed subvariety of this Grassmannian. By using the known facts
about the incidence varieties as presented in \cite{h92},
one can show that this subvariety is in fact a hypersurface,
i.e., it is given by just one algebraic equation in the
Pl{\"u}cker coordinates. This equation can be computed explicitly,
it is given by a homogeneous polynomial of degree 3:
\begin{eqnarray*}
&& 2p_{12}p_{34}p_{56}+p_{12}p_{26}p_{46}+p_{13}p_{15}p_{56}+
p_{23}p_{24}p_{46}+p_{13}p_{35}p_{45} \\
&& -p_{13}p_{25}p_{46}-
p_{13}p_{24}p_{56}-p_{12}p_{35}p_{46}-p_{12}p_{16}p_{56}-
p_{23}p_{34}p_{45}=0.
\end{eqnarray*}
Thus, $V$ contains a product vector if and only if its
Pl{\"u}cker coordinates satisfy this equation. The necessity can
be easily verified by setting $a_{ij}=\a_i\beta_j$, computing
the $p_{ij}$, and then verifying that the above equation is
identically satisfied. For sufficiency, due to the fact that the
$p_{ij}$ are algebraically dependent, one must verify that the
left hand side of the equation is not identically zero. For
instance, if we set $a_{11}=a_{22}=b_{12}=b_{23}=1$ and all
other $a_{ij}$ and $b_{ij}$ set equal to 0, then the left hand
side of the equation is equal to $-1$.
}
\end{example}

Similar polynomial equation exists for 3-dimensional subspaces $V$ of
a $2\ox 4$ system. The polynomial is again homogeneous, but now has
degree 4, and it is an integer linear combination of 149 monomials
in the Pl{\"u}cker coordinates $p_{ijk}$, $1\le i<j<k\le8$. This
equation is given in the appendix.
In the case of 4-dimensional subspaces of a $3\ox 3$ system
again there is such an equation, but so far we were not able
to compute it explicitly.

We note that Proposition~\ref{pp:3x3rank4,productstate} is in
agreement with a conjecture proposed in \cite{lms10}, where the
authors ask whether all $3\times3$ PPT entangled states of rank
4 are equivalent via stochastic LOCC to the PPT entangled states
arising from UPB \cite{bdm99}, i.e., the states
$V_A \ox V_B~(I_9-\sum^5_{i=1} \proj{a_i,b_i})~V_A^\dg \ox V_B^\dg$
where $V_A,V_B$ are invertible and $\ket{a_i,b_i}$ are the five
members of a UPB.
Indeed, Proposition~\ref{pp:3x3rank4,productstate} shows that
$\cR(\r)$ contains no product state, and so provides evidence in
support of the conjecture.

\section{\label{sec:Applicat}
Some applications}

In this section we apply Theorem~\ref{thm:distillation=MxNrankN} to
study the quantum correlations inside tripartite pure states. We also
discuss its meaning in terms of quantum discord \cite{oz02}. We
denote by $d_A,d_B,d_C$ the dimensions of the Hilbert spaces
$\cH_A,\cH_B,\cH_C$, respectively.

As a bipartite state is always a reduced density operator of a
tripartite pure state, say $\r=\proj{\psi}$, we can use Theorem
\ref{thm:distillation=MxNrankN} to characterize the quantum
correlation by means of distillable entanglement. For example, we
ask when are $\r_{AB}$ and $\r_{AC}$ simultaneously undistillable or
bound entangled and which states $\ket{\ps}$ satisfy such a
property. The question was first studied in 1999 by Thapliyal
\cite{thapliyal99}, who showed that an $N$-partite pure state
$\ket{\ps}$ has fully $(N-1)$-partite separable states if and only
if $\ket{\ps}$ is a GHZ state up to local unitary. Based on
Theorem~\ref{thm:distillation=MxNrankN}, we solve completely this
problem for the tripartite system. We need a lemma to prove our
theorem.

\begin{lemma}
 \label{le:pptpptX=SSX}
For a tripartite pure state $\r=\proj{\ps}$, the bipartite reduced
density operators $\r_{AB}$ and $\r_{AC}$ are PPT if and only if
$\ket{\ps}=\sum^d_{i=1} \ket{a_i} \ket{ii}$
up to local unitary operations.
\end{lemma}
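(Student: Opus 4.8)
The "if" direction is immediate: if $\ket{\ps}=\sum_i\ket{a_i}\ket{ii}$ then $\r_{AB}=\sum_{i,j}\ketbra{a_i}{a_j}\ox\ketbra{i}{i}$... wait, let me reconsider the form of the state.

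\textbf{Proof plan.} The ``if'' direction is immediate: if $\ket{\ps}=\sum_{i=1}^d\ket{a_i}_A\ket{i}_B\ket{i}_C$ with $\{\ket{i}\}$ orthonormal, then $\r_{AB}=\tr_C\proj{\ps}=\sum_i\proj{a_i}_A\ox\proj{i}_B$ and $\r_{AC}=\sum_i\proj{a_i}_A\ox\proj{i}_C$, both separable and hence PPT. So the content is the converse. Throughout I restrict $\cH_A,\cH_B,\cH_C$ to the supports of $\r_A,\r_B,\r_C$, so these are full rank; put $r_A=\rank\r_A$, etc. For a tripartite pure state $\r_{AB}$ and $\r_C$ have the same nonzero spectrum (Schmidt decomposition across the $AB|C$ cut), so $\rank\r_{AB}=r_C$, $\rank\r_{AC}=r_B$. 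A PPT state has rank at least the maximum of its two local ranks (otherwise it is distillable by result (C)); applying this to $\r_{AB}$ gives $r_C\ge\max(r_A,r_B)$ and to $\r_{AC}$ gives $r_B\ge\max(r_A,r_C)$. Hence $r_B=r_C=:r$ and $r_A\le r$.

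Next I invoke Proposition~\ref{pp:PPTrankN}: as an $r_A\times r$ PPT state of rank $r$ with $r_A\le r$, $\r_{AB}$ is separable, hence a sum of exactly $r$ product states. Since $\ket{\ps}$ purifies $\r_{AB}$ with purifying system $C$ of dimension $r$ equal to the number of terms, the standard purification correspondence gives an orthonormal basis $\{\ket{k}_C\}$ of $\cH_C$ such that each vector $\bra{k}_C\ket{\ps}$ is a product vector in $\cH_A\ox\cH_B$; equivalently $\ket{\ps}=\sum_{k=1}^r\ket{a_k}_A\ket{b_k}_B\ket{k}_C$ with all $\ket{a_k},\ket{b_k}\ne0$. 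From $\rank\r_B=r$ the $\ket{b_k}$ form a basis of $\cH_B$, and the $\ket{a_k}$ span $\cH_A$. Applying Proposition~\ref{pp:PPTrankN} to $\r_{AC}$ in the same way furnishes an orthonormal basis $\{\ket{l}_B\}$ of $\cH_B$ for which every $\bra{l}_B\ket{\ps}$ is a product vector in $\cH_A\ox\cH_C$.

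Now I combine the two. Since $\bra{l}_B\ket{\ps}=\sum_k\braket{l}{b_k}\ket{a_k}_A\ox\ket{k}_C$ and the $\ket{k}_C$ are orthonormal, this is a product vector iff all $\ket{a_k}$ with $\braket{l}{b_k}\ne0$ are mutually parallel. Group the indices $k$ by the direction of $\ket{a_k}$ into classes $T_1,\dots,T_m$; the condition says each $\ket{l}_B$ is orthogonal to every $\ket{b_k}$ outside a single class $T_{s(l)}$. Writing $L_s=\{l:s(l)=s\}$, orthonormality of the $\ket{l}_B$ and linear independence of the $\ket{b_k}$ give $|L_s|\le\dim\big(\lin\{b_k:k\notin T_s\}\big)^\perp=|T_s|$; since $\sum_s|L_s|=\sum_s|T_s|=r$, equality holds for every $s$, so $\cH_B$ is the orthogonal direct sum of the subspaces $W_s:=\big(\lin\{b_k:k\notin T_s\}\big)^\perp$, and a short argument (intersect the relations $\ket{b_k}\in\bigoplus_{s'\ne s}W_{s'}$, valid for $s\ne s(k)$) shows $\ket{b_k}\in W_{s(k)}$ for all $k$. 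Hence $\ket{\ps}=\sum_s\ket{\hat a^{(s)}}_A\ox\omega_s$, where $\omega_s$ is supported on the mutually orthogonal tensor factor $W_s\ox\lin\{\ket{k}_C:k\in T_s\}$. Taking a Schmidt decomposition of each $\omega_s$ and merging the resulting orthonormal bases of the $W_s$ (resp.\ of the $C$-blocks) into single orthonormal bases of $\cH_B$ (resp.\ $\cH_C$) — a legitimate local unitary since these subspaces are mutually orthogonal — puts $\ket{\ps}$ into the form $\sum_i\ket{a_i}_A\ket{i}_B\ket{i}_C$.

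The routine parts (rank bookkeeping, the purification correspondence, the Schmidt step) are standard; the delicate point is the combinatorial dimension count in the last paragraph, and in particular that it is needed precisely when $r_A<r$, where the $\ket{a_k}$ may be parallel and the change-of-basis matrix $[\braket{l}{b_k}]$ need not be a generalized permutation. That step is the main obstacle.
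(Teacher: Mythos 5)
Your proof is correct and follows essentially the same route as the paper's: rank bookkeeping via result (C) to get $d_B=d_C\ge d_A$, Proposition~\ref{pp:PPTrankN} applied to both $\r_{AB}$ and $\r_{AC}$, grouping the indices by parallelism of the $\ket{a_k}$, a cardinality/dimension count showing the two product decompositions share the same block structure, and an orthogonalization (your Schmidt decomposition of each block $\omega_s$ is the paper's spectral decomposition of each $\s_k$). The only cosmetic difference is that you phrase the second application of Proposition~\ref{pp:PPTrankN} through the purifying basis $\{\ket{l}_B\}$ and the subspaces $W_s$, where the paper works directly with the invertible change-of-basis matrix $[c_{ij}]$ between $\{\ket{e_i,f_i}\}$ and $\{\ket{a_j,j}\}$.
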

\begin{proof}
The sufficiency is obvious, e.g., the reduced state
$\r_{AB}=\tr_C\r=\sum_i\ketbra{a_i,i}{a_i,i}$ is separable.
Let us show the necessity. Suppose
$\r_{AB}^{\G} \geq 0$ and $\r_{AC}^{\G} \geq 0$. Recall that
$d_B=\rank\r_{AC}$ and $d_C=\rank\r_{AB}$. If follows from
result (C) that $d_B \geq \max(d_A, d_C)$ and
$d_C \geq \max(d_A, d_B)$. So we have $d:=d_B=d_C\geq d_A$.
According to Proposition~\ref{pp:PPTrankN}, the states
$\r_{AB}$ and $\r_{AC}$ are separable and
$\r_{AB} = \sum^d_{i=1} \proj{a_i, b_i}$,
where the states $\ket{b_i}$ span $\mathcal{H}_B$.
So, $\ket{\ps}= \sum^d_{i=1} \ket{a_i,b_i,i}$ up to local unitary
operations.

We choose a subset $\{\ket{g_1},\ldots,\ket{g_m}\}$ of
$\{\ket{a_1},\ldots,\ket{a_d}\}$ such that (i) $\ket{g_i}$ and
$\ket{g_j}$ are not parallel if $i\ne j$ and (ii) each $\ket{a_i}$
is parallel to some $\ket{g_j}$. For each $k\in\{1,\ldots,m\}$
let $S_k$ be the set of all $i$ such that $\ket{a_i}$ is parallel
to $\ket{g_k}$. Thus the sets $S_1,\ldots,S_m$ form a partition
of the set $\{1,\ldots,d\}$.

The state $\r_{AC}$ is also separable and has rank $d$.
It follows from Proposition~\ref{pp:PPTrankN} again that
$\r_{AC}=\sum^d_{i=1}\proj{e_i, f_i}$. Since the product states
$\ket{a_i,i}$ span $\cR(\r_{AC})$, there is an
invertible matrix $[c_{ij}]$ of order $d$ such that
$$ \ket{e_i, f_i}=\sum^d_{j=1}c_{ij}\ket{a_j,j}. $$
Since the $\ket{j}_C$ form an o.n. basis of $\cH_C$, this equation
implies that whenever coefficients $c_{ij}$ and $c_{il}$ are nonzero
then $\ket{a_j}$ and $\ket{a_l}$ must be parallel.
In other words there is a $k$ such that $\ket{e_i,f_i}$ is a
linear combination of $\ket{g_k,j}$ with $j\in S_k$. For each
$k\in\{1,\ldots,m\}$ let $\cH_k$ be the subspace of $\cH_C$
spanned by the $\ket{j}$ with $j\in S_k$, and let $F_k$ be the set
of all $i$ such that $\ket{e_i,f_i}$ is a linear combination of
$\ket{g_k,j}$ with $j\in S_k$. Clearly, the sets $F_1,\ldots,F_m$
also form a partition of $\{1,\ldots,d\}$.
It is easy to see that, for each $k$, the sets $F_k$ and $S_k$
have the same cardinality, say $d_k$.

By using the spectral decomposition, we can rewrite the state
$\s_k=\sum_{i\in F_k} \proj{e_i,f_i}$ as
$\s_k=\sum_{i=1}^{d_k} \proj{g_k,f_{k,i}}$,
where the states $\ket{f_{k,i}}$, $1\le i\le d_k$, form an
orthogonal basis of $\cH_k$. By taking the sum of all $\s_k$s,
and renaming all product states $\ket{g_k,f_{k,i}}$ as
$\ket{e_j',f_j'}$, we obtain
$$ \r_{AC} =  \sum^d_{j=1} \proj{e_j',f_j'}, $$
where the states $\ket{f_j'}$ are pairwise orthogonal. The
purification of this state, namely $\ket{\ps}$, reads
$\ket{\ps} = \sum^d_{i=1} \ket{e_i'} \ket{ii}$,
up to local unitary operations. This proves the necessity.
\end{proof}

 \bt
 \label{thm:nonnonX=SSX}
For a tripartite pure state $\r=\proj{\ps}$, the reduced states
$\r_{AB}$ and $\r_{AC}$ are undistillable if and only if
$\ket{\ps}=\sum^d_{i=1} \ket{a_i}\ket{ii}$
up to local unitary operations.
 \et
 \bpf
The sufficiency readily follows from the separability of $\r_{AB}$
and $\r_{AC}$. Let us show the necessity. As in the above proof we
have $d_B = d_C \geq d_A$. By
Theorem~\ref{thm:distillation=MxNrankN}, the states $\r_{AB}$ and
$\r_{AC}$ are PPT. Then the assertion follows from
Lemma~\ref{le:pptpptX=SSX}.
 \epf

 \bcr
 \label{cor:nonnonnon=SSS}
For a tripartite state $\ket{\ps}$, all three bipartite reduced
states are undistillable if and only if $\ket{\ps}$ is a generalized
GHZ state, i.e., $\ket{\ps}=\sum^d_{i=1} a_i\ket{iii}$ up to local
unitary operations.
 \ecr

Theorem~\ref{thm:nonnonX=SSX} reveals a new constraint for the
distillable entanglement: When a tripartite pure state has two
undistillable reduced density operators, then they have to be
separable. In other words, when a tripartite pure state has a bound
entangled reduced state, such as a PPT entangled state, then the
other two reduced states must be NPT and distillable. So regardless
of whether bound entanglement is PPT, or NPT as conjectured in
\cite{dss00}, it is not a generic property shared by two parties of
a tripartite pure state. The question whether a mixed tripartite
state may have two bound entangled reduced states is still an open
problem.

It is interesting that there is no counterpart to
Corollary \ref{cor:nonnonnon=SSS} for mixed states. For example, we
consider the three-qubit PPT entangled state
$\r=I-\sum^4_{i=1}\proj{\ps_i}$, where the normalized
$2\ox 2\ox 2$ unextendible product basis (UPB) reads \cite{bdm99,bravyi04}
 \bea
 \ket{\ps_1} &=& \ket{1,1,1},
 \nonumber\\
 \ket{\ps_2} &=& \ket{2,b,c},
 \nonumber\\
 \ket{\ps_3} &=& \ket{a,2,c^{\bot}},
 \nonumber\\
 \ket{\ps_4} &=& \ket{a^{\bot},b^{\bot},2},
 \eea
and $\ket{a},\ket{a^{\bot}}, \ket{b},\ket{b^{\bot}}$ and
$\ket{c},\ket{c^{\bot}}$ are orthornormal, respectively. All three
bipartitions of this state, i.e., $\r_{A:BC},\r_{B:AC},\r_{C:AB}$
are separable. Hence all three bipartite reduced states are
separable too. However the state $\r$ is PPT entangled and we cannot
distill any (bipartite) pure entanglement for quantum information
tasks. This is essentially different from Corollary
\ref{cor:nonnonnon=SSS}. From sufficiently many copies of
generalized GHZ state, one can asymptotically generate a standard
GHZ state $\ket{000}+\ket{111}$ based on the BBPS protocol
\cite{bbp96PRA}. Therefore besides the distillable entanglement, we
need additional parameters to characterize the quantum correlation
inside mixed tripartite states.

On the other hand Theorem~\ref{thm:nonnonX=SSX} can be better
understood via quantum discord \cite{oz02}, which measures the
bipartite quantum correlation beyond entanglement. Indeed, quantum
discord is larger than zero for many separable states, and it is
equal to zero iff the separable state is diagonal in one of the
systems, e.g., $\sum_i \r_i \ox \proj{i}$. In this state any
measurement on system A will lead to pairwise commuting Hermitian
operators on system B, which is a basic property of classical
mechanics. In this sense, the system $B$ is {\em classical} and it
has no quantum features. Furthermore the converse statement is also
true. That is, the state has the form $\sum_i \r_i\ox \proj{i}$ if
the system B is classical \cite{ccm10}. For simplicity we say that
the state $\r_{AB}$ is classical when the system $B$ is classical.
By using Theorem \ref{thm:nonnonX=SSX} and the definition of quantum
discord, we have

 \bl
 \label{le:two,threeclassical}
For a tripartite pure state $\ket{\ps}$, the reduced states
$\r_{AB}$ and $\r_{AC}$ are undistillable if and only if they have
zero quantum discord. In particular, $\ket{\ps}$ is a generalized
GHZ state if and only if all reduced states have zero quantum
discord.
 \el

Lemma \ref{le:two,threeclassical} shows that the collective
undistillability of reduced states in a tripartite pure state will
lead to the disappearance of quantum correlation among them. It is
unknown whether this is also true for mixed tripartite states.

To grasp the meaning of this result more intuitively, we consider a
practical case in which Alice and Bob are correlated in a state
$\r_{AB}$, and there is a third party, Charlie, from the
environment. Suppose they are in a pure state $\ket{\ps}$ and share
sufficiently many copies of $\ket{\ps}$. First, when at most one
system is classical, say Alice, we can distill at least two
bipartite reduced states by Theorem~\ref{thm:nonnonX=SSX} and
Lemma~\ref{le:two,threeclassical}. Second, when two systems are
classical, say Bob and Charlie, then there may be either one or two
distillable bipartite reduced states. The former tripartite state is
just $\ket{\ps}=\sum^d_{i=1} \ket{a_i}\ket{ii}$ of
Lemma~\ref{le:two,threeclassical}, while the latter corresponds to
the purification of a fully classical state of the form
$\sum_{ij}a_{ij}\proj{ij}$ up to local unitary \cite{ccm10}. In
other words, the state $\sum_{ij}
\sqrt{a_{ij}}\ket{\ph_{ij}}\ket{ij}$ has classical systems $B,C$ in
the same reduced state. Meanwhile, we have distillable $\r_{AB}$ and
$\r_{AC}$ when $d_A>\max(d_B,d_C)$ in view of result (C). Finally,
when all three systems are classical then the state $\ket{\ps}$ is a
generalized GHZ state. One cannot distill entanglement between any
two parties. This indicates that the less classical systems there
are in a composite system, the more distillable reduced states there
are. This provides a connection between quantum discord and
distillable entanglement, where the former is viewed as a kind of
quantum correlation rather than entanglement
measure \footnotemark.
\footnotetext{A connection between
quantum discord and the entanglement of formation, which is another
important entanglement measure, has been found by several authors,
see e.g., quant-ph/1007.1814, quant-ph/1006.4727.}

\section{\label{sec:distilling3x3rank4}
Distillation of some $3\times3$ states of rank 4}

So far we have identified some distillable entangled states, e.g.,
Theorem~\ref{thm:distillation=MxNrankN} shows that $M\times N$ NPT
states of rank $N$ are 1-distillable. A natural question then
arises: can we construct some more complex distillable states
with specified dimension and rank? In this section we focus on the
simplest nontrivial case, namely $3\times3$ states of rank 4.
In particular we will show that the checkerboard states \cite{dzd11}, which generalize the celebrated Bru{\ss}-Peres family, are
distillable if and only if they are NPT. Some of
the results are also extendible to higher dimensions.

We shall now examine the so called {\em checkerboard states},
i.e., the states $\r$ acting on a $3\ox 3$ system and defined by
Eq.~(\ref{jed:ro-psi}) with $M=N=3$, $R=4$, and the $\ket{\psi_i}$
given by:
\begin{eqnarray*}
\ket{\psi_1} &=& \ket{1}(a\ket{1}+d\ket{3})+\ket{2}(c\ket{2})
+\ket{3}(b\ket{1}+e\ket{3}), \\
\ket{\psi_2} &=& \ket{1}(g\ket{2})+\ket{2}(f\ket{1}+i\ket{3})
+\ket{3}(h\ket{2}), \\
\ket{\psi_3} &=& \ket{1}(j\ket{1}+m\ket{3})+\ket{2}(l\ket{2})
+\ket{3}(k\ket{1}+n\ket{3}), \\
\ket{\psi_4} &=& \ket{1}(q\ket{2})+\ket{2}(p\ket{1}+s\ket{3})
+\ket{3}(r\ket{2}). \\
\end{eqnarray*}
The parameters $a,b,\ldots,s$ are arbitrary complex numbers.
Each pure state $\ket{\psi_i}$ is written above in the form
$\ket{\psi_i}=\sum_j\ket{i}\ket{\psi_{ij}}$. By using these
$\ket{\psi_{ij}}$ and Eq. (\ref{jed:mat-Cj}) we have
$\r=(C_1,C_2,C_3)^\dag\cdot(C_1,C_2,C_3)$, where the complex
conjugates of the blocks $C_i$ are given by:
$$ C_1^*=\left(\begin{array}{ccc}
a&0&d\\0&g&0\\j&0&m\\0&q&0\end{array}\right),\quad
C_2^*=\left(\begin{array}{ccc}
0&c&0\\f&0&i\\0&l&0\\p&0&s\end{array}\right),\quad
C_3^*=\left(\begin{array}{ccc}
b&0&e\\0&h&0\\k&0&n\\0&r&0\end{array}\right). $$

Generically, these states $\r$ have rank 4 and $\cR(\r)$ contains
no product state (see \cite{dzd11}) and, consequently, $\r$ is
entangled. One can find in the same paper two concrete
examples of NPT checkerboard states $\r$ which are distillable.
In the next theorem we prove that this feature is shared by all
NPT checkerboard states.
 \bt
 \label{thm:checkboard}
 All NPT checkerboard states are distillable.
 \et
 \bpf
Let $\r$ be an NPT checkerboard state. We have to show that $\r$ is
distillable. This is certainly the case if one of the local ranks of
$\r$ is less than 3. Hence we may assume that $\r$ is a $3\times3$
state. We divide the proof into two steps. First, we eliminate all
but five parameters from the matrices $C_i^*$. Second, we
analytically investigate the partial transpose of various $2\times2$
and $2\times3$ projected states of $\r$. If any of them is entangled
then $\r$ is distillable and so we can dismiss such cases.

Let us carry out the first step. By an argument similar to one in
the beginning of the proof of
Proposition~\ref{pp:3x3rank4,productstate}, we can assume that
$b=e=j=m=0$. By using an ILO, we can also assume that $a=g=1$ and
$d=q=0$. Thus we have
$$ C_1^*=\left(\begin{array}{ccc}
 1&0&0\\0&1&0\\0&0&0\\0&0&0\end{array}\right),\quad
 C_2^*=\left(\begin{array}{ccc}
 0&c&0\\f&0&i\\0&l&0\\p&0&s\end{array}\right),\quad
 C_3^*=\left(\begin{array}{ccc}
 0&0&0\\0&h&0\\k&0&n\\0&r&0\end{array}\right). $$
If $i\ne0$ then $\r$ is trivially distillable, and so we assume that
$i=0$. If $s=0$ then $\r$ is distillable by
Corollary~\ref{cor:Mx3-NPT->dist}, so we may assume that $s\ne0$.
Similarly, we may assume that $n\ne0$. By multiplying $C^*_2$ by
$1/f$, we may assume that $f=1$. By multiplying the third columns of
$C^*_i$ by $1/s$, we may also assume that $s=1$. By multiplying
$C^*_3$ by $1/n$, we may assume that $n=1$. By using an ILO we can
also assume that $p=0$. We now have
 $$ C_1^*=\left(\begin{array}{ccc}
 1&0&0\\0&1&0\\0&0&0\\0&0&0\end{array}\right),\quad
 C_2^*=\left(\begin{array}{ccc}
 0&c&0\\1&0&0\\0&l&0\\0&0&1\end{array}\right),\quad
 C_3^*=\left(\begin{array}{ccc}
 0&0&0\\0&h&0\\k&0&1\\0&r&0\end{array}\right).$$
This completes the first step.

Now we carry out the second step. Let $\s=(V \ox I) ~\r~ (V^\dg \ox
I)$, where $V^\dag=(x\ket{1}+\ket{3})\bra{1}+\proj{2}$ and $x$ is a
complex parameter. Assume that $\s$ is PPT for all $x$. After
permuting simultaneously the rows and columns of $\s^\G$, we obtain
a direct sum $A \oplus B$ of two $3\times3$ Hermitian matrices.
Since $\s$ is PPT, we must have
\begin{eqnarray*}
\det A &=& |x|^2(|c|^2+|l|^2-|r|^2)-|x+h^*-kr^*|^2\ge 0, \\
\det B &=& |x+h^*|^2-|cx+lk^*|^2+|r|^2-|l|^2 \ge 0.
\end{eqnarray*}
By setting $x=0$, the first inequality gives $h=rk^*$. After
eliminating $h$, the above inequalities imply that $|c|=1$ and
$|l|=|r|$, and finally that $l=cr^*k/k^*$ if $k\ne0$. By using these
equalities, a computation shows that $\r$ is PPT which is a
contradiction. Hence $\s$ must be NPT for some $x$, and so this
particular $\s$ and $\r$ are distillable. One can similarly and more
easily handle the case $k=0$. This completes the proof.
 \epf

\section{\label{sec:full-rank}
More on full-rank properties}

Theorem~\ref{thm:distillation=nonFRP} gives a new theoretical tool,
the full-rank criterion, for detection of bipartite states
distillable under LOCC. This criterion is similar to the well-known
reduction criterion~\cite{hh99}, for both criteria ensure the
distillability of the states which violate them. It is easy to show
that these two criteria are incomparable. First, the distillable
$3\times3$ antisymmetric Werner state $\r_{as}$ is detected by the
full-rank criterion but not by reduction criterion. Second, there
are 1-distillable states $\r$ which can be detected by reduction
criterion but not by the full-rank criterion. An example is the
$2\times2$ entangled state $\r=\proj{\psi_1}+\proj{\psi_2}$, where
$\ket{\psi_1}=\sqrt{p/2}(\ket{00}+\ket{11})$ and
$\ket{\psi_2}=\sqrt{(1-p)/2}(\ket{00}-\ket{11})$ with $0<p<1$ and
$p\ne1/2$. It follows from Proposition \ref{pp:Mx2} below that all
$2\times2$ states have both LFRP and RFRP. Third, some distillable
states are detected by both criteria such as $M\times N$ states of
rank $<N$; in particular the pure entangled states.

Since the states violating either criterion are distillable, they
must be NPT. In other words, the PPT states satisfy both reduction
criterion and the full-rank criterion.

Next we consider the states with RFRP. Let $\r$ be an $M\times N$
state of rank $R$ written as in Eq. (\ref{ro}) where the $C_i$ are
$R\times N$ matrices. The matrix of $\r$ is the block matrix
$(C_i^\dag C_j)$, $i,j=1,\ldots,M$, where each block is an
$N\times N$ matrix. If $\rank(C_i)<N$, then the rank of the
submatrix $\left(C_i^\dag C_1,\ldots,C_i^\dag C_M \right)$ is $<N$.
Hence, if $\rank(C_i)<N$ for all $i$, then $R\le M(N-1)$. We
conclude that if the rank of $\r$ exceeds $M(N-1)$ then $\r$ must
have RFRP (in fact one of the $C_i$s must have rank $N$).

On the other hand, if $R<N$ then $\r$ violates RFRP. This follows
from the fact that $\rank(\bra{x}\r\ket{x}) \le R$ for all
$\ket{x}\in\cH_A$. Moreover, the $N\times N$ states $\r$ in the
antisymmetric space violate RFRP and some of them have rank
as large as $N(N-1)/2$. To prove this, note that the states
$\ket{\pi_{ij}}=\ket{ij}-\ket{ji}$, $1\le i<j\le N$, form a basis of
the antisymmetric space $\cA$. We have $\r=\sum_k \proj{\psi_k}$,
where $\ket{\psi_k}\in\cA$ are non-normalized states. Let
$\ket{x}\in\cH_A$ be a nonzero vector. For $i<j$ we have
$$ \braket{\pi_{ij}}{x,x}=\braket{i,j}{x,x}-\braket{j,i}{x,x}=0. $$
Since each $\ket{\psi_k}$ is a linear combination of the
$\ket{\pi_{ij}}$, it follows that $\ket{x,x}\in\ker\r$.
Consequently, $\r$ violates RFRP.

For convenience, we collect the above results in a proposition.
 \bpp
Let $R$ be the maximum rank of $M\times N$ states which violate RFRP.
Then $R\le M(N-1)$ and, if $M=N$, $R\ge N(N-1)/2$.
 \epp

How can we verify whether $\r$ has RFRP? To answer this question,
let us write $\r$ as in Eq. (\ref{ro}), where the $C_i$ are
$R\times N$ matrices and $R=\rank(\r)$. For
$\ket{x}_A=\sum_k \xi_k\ket{k}$, we have
$$ \bra{x}\r\ket{x}=\left(\sum_k \xi_k C_k\right)^\dag \cdot
\left(\sum_k \xi_k C_k\right), $$
and so $\rank(\bra{x}\r\ket{x})=\rank(\sum_k \xi_k C_k)$.
For small $N$ the answer to our question can be obtained simply
by computing all $N\times N$ minors of the matrix $\sum_k\xi_k C_k$.

In general, $\r$ violates RFRP if and only if the space of
$R\times N$ matrices spanned by $C_1,\ldots,C_M$ contains no
matrix of rank $N$. This is certainly the case if $R<N$. The above
problem is related to the still open problem of
Edmonds \cite{edmonds67}, from theoretical computer science, which
asks to decide whether a given linear subspace of complex
$M\times N$ $(M\le N)$ matrices contains a matrix of rank $M$.

A similar test is valid for the LFRP. For
$\ket{y}_B=\sum_l \eta_l\ket{l}$, we have
$\bra{y}\r\ket{y}=\left(\bra{y}C_i^\dag C_j\ket{y}\right)$,
where the right hand member is an $M\times M$
matrix with the indicated entries. It follows that the rank of
$\bra{y}\r\ket{y}$ is equal to the rank of the $R\times M$ matrix
$\left(C_1\ket{y},\ldots,C_M\ket{y}\right)$. Let $K_i$,
$i=1,\ldots,N$, be the $R\times M$ matrix defined as follows:
the $j$th column of $K_i$ is the $i$th column of $C_j$. Then $\r$
violates LFRP if and only if the space of $R\times M$ matrices
spanned by $K_1,\ldots,K_N$ contains no matrix of rank $M$.

It is not hard to construct examples of states which possess
LFRP but violate RFRP.
In the next proposition we consider the RFRP for $M\times2$ states.

\bpp \label{pp:Mx2}
All $M\times2$ states have RFRP. Equivalently, all $2\times N$
states have LFRP.
\epp

\bpf
The equivalence is clear since we can interchange the parties
A and B. We shall prove only the first assertion.

The proof is by contradiction. Let us assume that there exists an
$M\times2$ state $\r$ which violates RFRP. Then the rank, $R$, of
$\r$ must be at least two. We have
$$ \r=\sum_{i,j=1}^M \ketbra{i}{j}\ox C_i^\dag C_j, $$
where $C_i$ are $R\times2$ matrices. As $\rank(\r_A)=M$,
$C_i\ne0$ for each index $i$. Since $\r$ violates RFRP,
the matrix $\sum_i \xi_iC_i$ must have rank less than two for
arbitrary choice of complex numbers $\xi_i$. It follows that
each $C_i$ must have rank one and so
$C_i=\ketbra{v_i}{\phi_i}$, where
$\ket{\phi_i}=\a_i\ket{1}+\b_i\ket{2}\in\cH_B$ and $\ket{v_i}$
are nonzero vectors.

We have $C_i^\dag C_j=\braket{v_i}{v_j}\ketbra{\phi_i}{\phi_j}$.
Hence
$$ \r_B=\sum_i C_i^\dag C_i=\sum_i \|v_i\|^2
\left(\begin{array}{cc} |\a_i|^2 & \a_i\b_i^* \\
\a_i^*\b_i & |\b_i|^2 \end{array}\right). $$
By using the Lagrange identity, we find that
\begin{eqnarray*}
\det(\r_B) &=&
\left( \sum_i |\a_i|^2\|v_i\|^2 \right)
\left( \sum_i |\b_i|^2\|v_i\|^2 \right)
-\left| \sum_i |\a_i\b_i^*|\|v_i\|^2 \right|^2 \\
&=& \sum_{i<j} \|v_i\|^2\|v_j\|^2|\a_i\b_j-\a_j\b_i|^2.
\end{eqnarray*}
Since all $\ket{v_i}\ne0$ and $\det(\r_B)>0$, there exists
a pair of indexes $i,j$ such that $\a_i\b_j-\a_j\b_i\ne0$.
Without any loss of generality we may assume that $i=1$ and $j=2$.
We know that the $R\times2$ matrix
\begin{eqnarray*}
\xi_1 C_1+\xi_2 C_2 &=&
\xi_1\ketbra{v_1}{\phi_1}+\xi_2\ketbra{v_2}{\phi_2} \\
&=& \left(
\xi_1\a_1^*\ket{v_1}+\xi_2\a_2^*\ket{v_2},
\xi_1\b_1^*\ket{v_1}+\xi_2\b_2^*\ket{v_2} \right)
\end{eqnarray*}
has rank less than two for arbitrary complex numbers $\xi_1$
and $\xi_2$. Thus all of its $2\times2$ minors must vanish:
\begin{eqnarray*}
0 &=&
\left| \begin{array}{cc}
\xi_1\a_1^*v_{1,k}+\xi_2\a_2^*v_{2,k} &
\xi_1\b_1^*v_{1,k}+\xi_2\b_2^*v_{2,k} \\
\xi_1\a_1^*v_{1,l}+\xi_2\a_2^*v_{2,l} &
\xi_1\b_1^*v_{1,l}+\xi_2\b_2^*v_{2,l} \end{array} \right| \\
&=&
\xi_1\xi_2(\a_1\b_2-\a_2\b_1)^*(v_{1,k}v_{2,l}-v_{2,k}v_{1,l}),
\end{eqnarray*}
where $1\le k<l\le M$ and $v_{i,k}$ denotes the $k$th
component of $\ket{v_i}$. Since $\xi_1$ and $\xi_2$ are arbitrary
and $\a_1\b_2-\a_2\b_1\ne0$, we have
$v_{1,k}v_{2,l}-v_{2,k}v_{1,l}=0$ for all $k<l$. Hence the vectors
$\ket{v_1}$ and $\ket{v_2}$ are linearly dependent.
Without any loss of generality we may assume that
$\ket{v_1}=\ket{v_2}$.

Recall from Eqs. (\ref{jed:ro-psi}) and (\ref{jed:mat-Cj}) that
we can write
$$ \r=\sum_{i=1}^R \proj{\psi_i}, $$
where $\ket{\psi_i}=\sum_j \ket{j}\ox \ket{\psi_{ij}}$ and
$\bra{\psi_{ij}}$ is the $i$th row of $C_j$, i.e.,
$$ \ket{\psi_{ij}}=v_{j,i}^*(\a_j\ket{1}+\b_j\ket{2})
=v_{j,i}^* \ket{\phi_j}. $$
Since $\ket{v_1}=\ket{v_2}$ we obtain that
$$ \ket{\psi_i}=v_{1,i}^*(\ket{1,\phi_1}+\ket{2,\phi_2})+
\sum_{j>2} v_{j,i}^*\ket{j,\phi_j}, \quad i=1,\ldots,R. $$
We conclude that $\cR(\r)$ is contained in the
subspace spanned by the vectors
$\ket{1,\phi_1}+\ket{2,\phi_2}$ and $\ket{j,\phi_j}$ for
$j=3,\ldots,R$. This contradicts the hypothesis that
$\rank(\r)=R$, and completes the proof.
\epf

In particular, all $2\times2$ states have LFRP and RFRP.
While all $2\times N$ states have LFRP, even the simplest
non-trivial case, i.e., a $2\times3$ state, may violate RFRP.
Indeed, one can easily verify that the $2\times3$ state
$\r=(\ket{11}+\ket{22})(\bra{11}+\bra{22})+\proj{23}+\proj{13}$
violates RFRP.

Hence for any $2\times N$ state $\r$ of rank $R\le N$ there exists a
state $\ket{x}\in\cH_A$ such that $\rank \bra{x} \r \ket{x} <N$.
However this is not true for $R>N$.

Finally, we have seen that the antisymmetric state $\r_{as}$ given
by Eq. (\ref{jed:ro-as}) violates RFRP. However, as one can easily
verify, its two-copy version $\r_{as}\ox\r_{as}$ has RFRP.

\section{\label{sec:conclusion} Conclusions}

Following \cite{hlv00}, we have introduced two full-rank properties,
LFRP and RFRP, and proved that the bipartite state violating at
least one of them is 1-distillable. We refer to this result as the
full-rank criterion for distillability. By using it, we obtained our first main result which asserts that the $M\times N$ NPT states of
rank $N$ are 1-distillable. In particular, this result provides the
affirmative solution to an open problem first proposed in 1999. This  result leads to a new characterization of the distillable entanglement,
namely a tripartite pure state $\ket{\ps}$ cannot have two
undistillable entangled bipartite reduced states. We also derived an
explicit expression for tripartite pure states having two
undistillable bipartite reduced states. Both of these states turn
out to be separable. On the other hand, we define reducible and
irreducible states and use them to distill some entangled states
possessing both LFRP and RFRP. The most important result in this
direction is that the checkerboard states \cite{dzd11} are
distillable under LOCC if and only if they are NPT.

We now list the most interesting results proved in this paper.
We shall just continue the enumeration of the results
(A-E) from the Introduction.
 \bem
\item[(F)]
Bipartite $M\times N$ NPT states of rank $\max(M,N)$ are
1-distillable.
\item[(G)]
A bipartite rank-4 state is separable if and only if it is PPT
and its range contains at least one product state.
\item[(H)]
If a bipartite state violates at least one of the two full-rank
conditions, then it is 1-distillable.
\item[(I)]
For a tripartite pure state $\ket{\psi}$, all three bipartite
reduced states are undistillable if and only if $\ket{\psi}$
is a generalized GHZ state.
\item[(J)]
The NPT checkerboard states (acting on a $3\ox 3$ system) are
1-distillable.
 \eem

For convenience, let us summarize what is known and what remains to
be done to finish off the coarse classification of rank 4 states
$\r$ in a $3\ox 3$ system. There are two cases:

1) $\cR(\r)$ contains a product state;

2) $\cR(\r)$ contains no product state.

In case 1), we know that if $\r$ is PPT then it is separable, and
otherwise it is 1-distillable.

In case 2), $\r$ is always entangled and may be PPT or NPT. For the
PPT subcase, there is a very precise conjecture \cite{lms10}
describing how these PPT entangled states can be generated from
UPBs. For the NPT subcase, we conjecture that these states are
distillable, and we know that this is true for the checkerboard
states (see Theorem~\ref{thm:checkboard}).

There are many related open problems and conjectures for further
study that originate from this paper. First, we have seen from
Proposition \ref{pp:irreducibledistillable,commonkernel} that an
irreducible bipartite state $\r$, for which there exists a nonzero
vector $\ket{x}$ such that $C_i\ket{x}=0$ for all but one of the
indexes $i$, is distillable. (Note that if $C_i\ket{x}=0$ then
$\ket{i,x}\in\ker\r$.) It is thus natural to ask whether
we can relax the above condition on the kernels of the $C_i$s.
For example one can conjecture that any irreducible $M\times N$ NPT state $\r$, with one of the $C_i$ of deficient rank, is distillable?  Since Werner states have full rank, all of its blocks $C_i$ also
have full rank, and so they do not contradict the conjecture.
Because of the rank condition on the $C_i$, the question evidently depends on the analysis of product states in the kernel of $\r$.
For example, it is a well-known fact that in an $M\ox  N$
system every subspace $V\subseteq\cH_A\ox\cH_B$ with
$\dim V>(M-1)(N-1)$ contains a product vector. To prove this,
we can just apply \cite[Proposition 11.4]{h92} to the Segre variety
consisting of all product vectors.

If the above conjecture turns out to be true, it would follow that
every bipartite $M\times N$ $(M\le N)$ NPT state $\r$ of rank 4 is
distillable. Indeed, if $M<3$ then $\r$ is distillable by result (A).
If $N>4$ then $\r$ is distillable by result (C). Thus we may assume
that $M=N=3$. As $\ker\r$ has dimension 5, it contains a product
state. By the conjecture, $\r$ is distillable.

To state the second conjecture, let $\r$ be a bipartite
NPT state acting on some $M\ox N$ system. We can view
$\r^{\ox k}$ as a bipartite state acting on a $M^k\ox N^k$ system.
Then the conjecture claims that for some $k\ge1$, the state
$\r^{\ox k}$ can be locally transformed into an $m\times n$ NPT
state of rank $\max(m,n)+1$.

If this conjecture is true, then we can project some tensor power of
an entangled 1-undistillable Werner state $\r_w$ onto an $m\times n$
$(m\le n)$ NPT state $\s$ of rank $n+1$. Because it is widely
believed that $\r_w$ is undistillable, we can thus conjecture that
there exists an undistillable $M\times N$ NPT state of rank $N+1$.
This would then imply that the value $N$ for the rank of the states
in Theorem~\ref{thm:distillation=MxNrankN} is maximal.

The third question is about the irreducible states. From
Lemma~\ref{le:reducible=SUMirreducible} we know that the
distillation problem relies on further investigation of irreducible
states. For example we ask: can the tensor product of
two irreducible states be a reducible state?
The existence of such phenomenon would become a sort of activation
of reducibility, which is akin to the activation of PPT bound entanglement \cite{hhh99}.

We also have seen that all $M\times N$ NPT states of rank $N$ are
1-distillable. In fact previously researchers have shown that such
states cannot be PPT entangled. However it is also well-known that
there are PPT entangled $M\times N$ states with rank bigger than
$N$, and meanwhile, there exist 1-undistillable but $n$-distillable
NPT states such as the Watrous state \cite{watrous04}. Is this an
essential difference between the above two families of states? Does
the existence of PPT entanglement imply that of 1-undistillable NPT
states?

Another interesting problem is to decide which PPT states are
entangled. For example, what is the relationship between the existence of product states in the range and entanglement of
PPT states? Can we conjecture that if there are more known
product states in the range of a PPT state, then it becomes easier
to decide whether it is entangled?

\acknowledgments

We thank two anonymous referees for their valuable suggestions.
Consequently the paper has been thoroughly reorganized and
considerably improved. We thank Andreas Winter for pointing out that
the $3\times3$ antisymmetric state violates RFRP. We also thank  
L. Skowronek for his e-mail \cite{skow} which led to a minor correction in the original version of the paper. Part of this work
was done when LC was visiting Prof. Heng Fan at the Institute of
Physics, CAS, Beijing, China and Prof. Runyao Duan at the University
of Technology, Sydney, Australia. The CQT is funded by the Singapore
MoE and the NRF as part of the Research Centres of Excellence
programme. The second author was supported in part by an NSERC
Discovery Grant.

\section{\label{sec:dodatak} Appendix}

We give here the equation which is necessary and sufficient for
a 3-dimensional subspace $V$ of a $2\ox 4$ system to contain at least
one product state. Let $\{\ket{a},\ket{b},\ket{c}\}$ be an arbitrary
basis of $V$. Thus
$$ \ket{a}=\sum_{i=1}^2 \sum_{j=1}^4 a_{ij}\ket{i,j}, $$
and similarly for $\ket{b}$ and $\ket{c}$.
Let us form the $3\times8$ matrix
from the components of these three states:
$$
P=\left( \begin{array}{cccccccc}
a_{11}& a_{12}& a_{13}& a_{14} &a_{21} &a_{22} &a_{23} &a_{24} \\
b_{11}& b_{12}& b_{13}& b_{14} &b_{21} &b_{22} &b_{23} &b_{24} \\
c_{11}& c_{12}& c_{13}& c_{14} &c_{21} &c_{22} &c_{23} &c_{24} \\
\end{array} \right). $$
Denote by $P^{ijk}$, $1\le i<j<k\le8$, the $3\times3$ submatrix
of $P$ made up from the $i$th, $j$th and $k$th columns.
The Pl{\"u}cker coordinates of $V$ are the 56 determinants
$p_{ijk}=\det P^{ijk}$. If we change the basis, the Pl{\"u}cker
coordinates of $V$ will be changed only by an overall factor.
The equation we aluded to is the following:
$$ p_{123}F_1+p_{124}F_2+p_{134}F_3+p_{234}F_4=0, $$
where
\begin{eqnarray*}
F_1 &=&
3p_{124}p_{578}p_{678}-3p_{125}p_{478}p_{678}+p_{126}p_{478}p_{578}
+2p_{127}p_{458}p_{678}-p_{127}p_{468}p_{578}+p_{128}p_{178}p_{678}
-p_{128}p_{278}p_{578} \\
&& -2p_{128}p_{358}p_{678}+p_{128}p_{368}p_{578}-3p_{134}p_{568}p_{678}
+3p_{135}p_{468}p_{678}-p_{136}p_{458}p_{678}-p_{136}p_{468}p_{578}
+p_{137}p_{468}p_{568} \\
&& -p_{138}p_{168}p_{678}+p_{138}p_{258}p_{678}+p_{138}p_{268}p_{578}
-p_{138}p_{368}p_{568}+2p_{145}p_{278}p_{678}-p_{145}p_{368}p_{678}
-2p_{145}p_{467}p_{678} \\
&&
-2p_{146}p_{178}p_{678}+2p_{146}p_{278}p_{578}+4p_{146}p_{358}p_{678}
-2p_{146}p_{368}p_{578}-2p_{146}p_{457}p_{678}+2p_{146}p_{467}p_{578}
+p_{147}p_{168}p_{678}\\
&& -p_{147}p_{258}p_{678}-p_{147}p_{268}p_{578}+p_{147}p_{368}p_{568}
-p_{147}p_{467}p_{568}+3p_{234}p_{568}p_{578}-2p_{235}p_{458}p_{678}
-2p_{235}p_{468}p_{578} \\
&& +2p_{236}p_{458}p_{578}-p_{237}p_{458}p_{568}-p_{238}p_{258}p_{578}
+p_{238}p_{358}p_{568}-2p_{245}p_{278}p_{578}-2p_{245}p_{358}p_{678}
+4p_{245}p_{368}p_{578} \\
&& +3p_{245}p_{457}p_{678} -p_{245}p_{467}p_{578}
-p_{246}p_{358}p_{578}-p_{246}p_{457}p_{578}+p_{247}p_{258}p_{578}
-p_{247}p_{358}p_{568}+p_{247}p_{457}p_{568} \\
&& -2p_{345}p_{368}p_{568}-3p_{345}p_{456}p_{678}+2p_{345}p_{467}p_{568}+p_{346}p_{358}p_{568}+p_{346}p_{456}p_{578}
-p_{346}p_{457}p_{568},
\end{eqnarray*}
\begin{eqnarray*}
F_2 &=&
p_{125}p_{378}p_{678}-p_{127}p_{178}p_{678}+p_{127}p_{278}p_{578}
+2p_{127}p_{358}p_{678}-p_{127}p_{368}p_{578}-2p_{127}p_{457}p_{678}
+p_{127}p_{467}p_{578} \\
&& +3p_{134}p_{567}p_{678}-p_{135}p_{278}p_{678}-p_{135}p_{368}p_{678}
-2p_{135}p_{467}p_{678}+p_{136}p_{178}p_{678}-p_{136}p_{278}p_{578}
-2p_{136}p_{358}p_{678} \\
&& +p_{136}p_{368}p_{578}+4p_{136}p_{457}p_{678}-p_{136}p_{467}p_{578}
+p_{137}p_{168}p_{678}-p_{137}p_{258}p_{678}-p_{137}p_{268}p_{578}
+p_{137}p_{368}p_{568} \\
&& -p_{137}p_{467}p_{568}+2p_{145}p_{367}p_{678}-p_{146}p_{357}p_{678}
-p_{147}p_{167}p_{678}+p_{147}p_{257}p_{678}+p_{147}p_{267}p_{578}
-p_{147}p_{367}p_{568} \\
&& +p_{147}p_{467}p_{567}-3p_{234}p_{567}p_{578}+p_{235}p_{278}p_{578}
+2p_{235}p_{358}p_{678}-p_{235}p_{368}p_{578}-p_{235}p_{457}p_{678}
+4p_{235}p_{467}p_{578} \\
&& -2p_{236}p_{457}p_{578}+p_{237}p_{258}p_{578}-p_{237}p_{358}p_{568}
+p_{237}p_{457}p_{568}-p_{245}p_{357}p_{678}-2p_{245}p_{367}p_{578}
+p_{246}p_{357}p_{578} \\
&& -p_{247}p_{257}p_{578}+p_{247}p_{357}p_{568}-p_{247}p_{457}p_{567}
+p_{345}p_{356}p_{678}+2p_{345}p_{367}p_{568}-2p_{345}p_{467}p_{567}
-p_{346}p_{357}p_{568} \\
&& +p_{346}p_{457}p_{567},
\end{eqnarray*}
\begin{eqnarray*}
F_3 &=&
p_{135}p_{268}p_{678}-p_{136}p_{168}p_{678}+p_{136}p_{258}p_{678}
+p_{136}p_{268}p_{578}-p_{136}p_{368}p_{568}-2p_{136}p_{456}p_{678}
+p_{136}p_{467}p_{568} \\
&& -p_{145}p_{267}p_{678}+p_{146}p_{167}p_{678}-p_{146}p_{257}p_{678}
-p_{146}p_{267}p_{578}+p_{146}p_{356}p_{678}+p_{146}p_{367}p_{568}
-p_{146}p_{467}p_{567} \\
&& -p_{235}p_{258}p_{678}-p_{235}p_{268}p_{578}+p_{235}p_{368}p_{568}
+2p_{235}p_{456}p_{678}-p_{235}p_{467}p_{568}-p_{236}p_{258}p_{578}
+p_{236}p_{358}p_{568} \\
&& +2p_{236}p_{456}p_{578}-p_{236}p_{457}p_{568}+p_{245}p_{257}p_{678}
+p_{245}p_{267}p_{578}-p_{245}p_{356}p_{678}-p_{245}p_{367}p_{568}
+p_{245}p_{467}p_{567} \\
&& +p_{246}p_{257}p_{578}-p_{246}p_{356}p_{578}-p_{246}p_{357}p_{568}
+p_{246}p_{457}p_{567}+p_{346}p_{356}p_{568}-p_{346}p_{456}p_{567},
\end{eqnarray*}
\begin{eqnarray*}
F_4 &=&
p_{235}p_{258}p_{578}-p_{235}p_{358}p_{568}-2p_{235}p_{456}p_{578}
+p_{235}p_{457}p_{568}-p_{245}p_{257}p_{578}+p_{245}p_{356}p_{578}
+p_{245}p_{357}p_{568} \\
&& -p_{245}p_{457}p_{567}-p_{345}p_{356}p_{568}+p_{345}p_{456}p_{567}.
\end{eqnarray*}
Although this equation looks complicated, it can be easily
programmed on a computer and used to decide whether $V$ contains
a product state. The main point is that we do not need to solve
numerically any algebraic equations.

\end{document}